
\documentclass[12pt]{amsart}

\usepackage{graphicx,color}
\usepackage{setspace}
\usepackage{amsmath, amssymb,amsthm, amsfonts}

\usepackage{accents}

\newtheorem{thm}{Theorem}[section]
\newtheorem{lma}{Lemma}[section]

\newtheorem{cor}{Corollary}[section]

\theoremstyle{definition}
\newtheorem{definition}{Definition}[section]

\theoremstyle{remark}
\newtheorem{remark}{Remark}[section]

\numberwithin{equation}{section}

\newcommand{\tr}{\mbox{tr}}

\renewcommand{\div}{\mbox{div}}

\newcommand{\Ric}{\mbox{Ric}}
\newcommand{\R}{\mathbb R}

\newcommand{\be}{\begin{equation}}
\newcommand{\ee}{\end{equation}}
\newcommand{\bee}{\begin{equation*}}
\newcommand{\eee}{\end{equation*}}

\def\p{\partial}

\def\lf{\left}
\def\ri{\right}
\def\Pi{\displaystyle{\mathbb{II}}}

\def\fe{\mathfrak{e}}

\def\S{\Sigma}

\def\H{\mathbb{H}}

\def\e{\epsilon}

\def\a{\alpha}

\def\wt{\widetilde }

\def\ol{\overline}

\begin{document}
\title[Some Conformal Positive Mass Theorems]{
Some Conformal Positive Mass Theorems}

\author{Luen-Fai Tam$^1$}
\address[Luen-Fai Tam]{The Institute of Mathematical Sciences and Department of
 Mathematics, The Chinese University of Hong Kong, Shatin, Hong Kong, China.}
 \email{lftam@math.cuhk.edu.hk}
\thanks{$^1$Research partially supported by Hong Kong RGC General Research Fund \#CUHK 14305114}
\author{Qizhi Wang}
\address[Qizhi Wang]{The Institute of Mathematical Sciences and Department of
 Mathematics, The Chinese University of Hong Kong, Shatin, Hong Kong, China.}
 \email{qzwang@math.cuhk.edu.hk}
\renewcommand{\subjclassname}{
  \textup{2010} Mathematics Subject Classification}
\subjclass[2010]{Primary 83C50; Secondary 53C20}
\begin{abstract}
In \cite{S}, Simon proved a conformal positive mass theorem, which was used to prove uniqueness of black holes \cite{MS,G}. In this note, we will generalize Simon's conformal positive mass theorem   in two directions. First we will consider spacetime version of conformal positive mass theorems on asymptotically flat initial data set. Next, we will prove a  conformal positive mass theorem on  asymptotically hyperbolic manifolds.
\end{abstract}
\renewcommand{\subjclassname}{
  \textup{2010} Mathematics Subject Classification}
\subjclass[2010]{Primary 83C99; Secondary 53C20}

\date{July 2016}
\keywords{positive mass theorems, asymptotically flat initial data set,  asymptotically hyperbolic manifolds}
\maketitle
\markboth{  Luen-Fai Tam and Qizhi Wang}
{Some Conformal Positive Mass Theorems}
\section{introduction}

 In \cite{S}, motivated by the work of Masood \cite{Masood}, Simon proved two conformal positive mass theorems. The following version has been used to prove some uniqueness results for charged black holes, see \cite{MS,G}: Suppose $(M^n,g)$ is asymptotically flat manifolds. For simplicity, let us assume that $M$ has only one end. If $n>3$, we also assume that $M$ is spin. Suppose $f$ is a smooth function so that $e^{2f}g$ is also asymptotically flat. If the scalar curvature $\mathcal{S}$ of $g$ and the scalar curvature $\wt{\mathcal{S}}$ of $\wt g=e^{2f}g$ satisfies
$$
\mathcal{S}+\a e^{2f}\wt{\mathcal{S}}\ge0
$$
for some $\a>0$, then the ADM masses $\mathfrak{m}(g)$ of $g$ and $\mathfrak{m}(\wt g)$ of $\wt g$ satisfies
$$
\mathfrak{m}(g)+\a \mathfrak{m}(\wt g)\ge 0
$$
Moreover, equality holds if and only if $g, \wt g$ are the same and are flat.
In \cite{Wa}, the second author generalized the above result to asymptotically flat manifolds with compact inner boundary.

In this note, we will generalize Simon's result in two directions. First we will consider spacetime version of conformal positive mass theorems on asymptotically flat initial data set. Next, we will prove a   conformal positive mass theorems  asymptotically hyperbolic manifolds. We always assume the dimension of the manifold is at least three. We obtain the following:
\begin{thm}\label{t-intro-1}
Let $(M^n, g, K)$ be an asymptotically flat initial data set. If $n>3$, we also assume the manifold is spin. Let $f$ be  a smooth function which decays in an appropriate way so that $(M^n, e^{2\beta f}g, e^{\beta f}K)$ is also an asymptotically flat initial data set for all $0<\beta\le 1$.  Let $\wt g=e^{2\beta f}g$, $\wt K= e^{\beta f}K$. Suppose
$$
(1-\beta)\mu+\beta e^{2f}\wt \mu\ge (1-\beta)|J|_g+\beta e^{2f}|\wt J|_{\wt g}.
$$
Then the energy-momentum vector $(E,P)$ for $g$ and the energy-momentum vector $(\wt E,\wt P)$ satisfies
$$
(1-\beta)E+\beta \wt E\ge |P|.
$$
Moreover, if $(1-\beta)E+\beta \wt E=0$, then $f=0$ and $(M,g)$ can be isometrically embedded in the Minkowski space as an asymptotically flat Cauchy surface with second fundamental form $K$.
\end{thm}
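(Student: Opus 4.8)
The natural strategy is to reduce this spacetime conformal statement to the standard spacetime positive mass theorem (for asymptotically flat initial data sets satisfying the dominant energy condition) applied to a single, cleverly chosen conformally modified initial data set, rather than to $g$ and $\wt g$ separately. The key observation is that the hypothesis combines the mass aspect of $g$ (weighted by $1-\beta$) and of $\wt g = e^{2\beta f}g$ (weighted by $\beta$) into a single dominant-energy-type inequality, so I expect there to exist a conformal factor $u$ such that the metric $\hat g = u^{\frac{4}{n-2}}g$, equipped with a suitably scaled $\hat K$, is an asymptotically flat initial data set whose energy-momentum vector is exactly $\big((1-\beta)E+\beta\wt E,\,P\big)$ and whose local energy/momentum densities $\hat\mu,\hat J$ satisfy $\hat\mu\ge|\hat J|_{\hat g}$ precisely because of the assumed inequality.

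\textbf{Main steps.} First I would make the ansatz $\hat g = u^{\frac{4}{n-2}} g$ with $u = (1-\beta) + \beta e^{\frac{(n-2)\beta}{2} f}$ or a similar convex combination interpolating between $g$ (at $\beta=0$) and $\wt g$ (at $\beta=1$); the point is that $u\to 1$ at infinity at the correct rate so that $\hat g$ is asymptotically flat with ADM energy equal to the $\beta$-weighted combination $(1-\beta)E+\beta\wt E$, which one checks from the asymptotic expansion of $u$ and the standard formula relating the ADM energy of a conformal metric to that of the background plus a flux of $u$. Second, I would choose $\hat K = u^{-\frac{2}{n-2}}K$ (so that $\hat P = P$, again from the asymptotics), and compute the constraint quantities $\hat\mu = \tfrac12(\hat{\mathcal S} + (\tr_{\hat g}\hat K)^2 - |\hat K|_{\hat g}^2)$ and $\hat J = \div_{\hat g}(\hat K - (\tr_{\hat g}\hat K)\hat g)$ under this conformal change, using the conformal transformation law for scalar curvature $\hat{\mathcal S} = u^{-\frac{n+2}{n-2}}\big(-\tfrac{4(n-1)}{n-2}\Delta_g u + \mathcal S\, u\big)$. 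Third, I would verify that the hypothesis $(1-\beta)\mu + \beta e^{2f}\wt\mu \ge (1-\beta)|J|_g + \beta e^{2f}|\wt J|_{\wt g}$ translates, after multiplying through by appropriate powers of $u$, into the pointwise dominant energy condition $\hat\mu \ge |\hat J|_{\hat g}$ for $(\hat g,\hat K)$. Finally, applying the spacetime positive mass theorem to $(\hat g,\hat K)$ yields $\hat E\ge|\hat P|$, i.e.\ $(1-\beta)E+\beta\wt E\ge|P|$, and the rigidity case $\hat E=|\hat P|$ forces $(\hat g,\hat K)$ to embed in Minkowski space, from which $f=0$ (so $u\equiv1$) and the embedding of $(M,g,K)$ follow.

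\textbf{Main obstacle.} The crux is the second and third steps: one must verify that a single conformal factor $u$ can simultaneously (i) produce the correct weighted ADM energy at infinity, and (ii) make the dominant energy condition hold pointwise. The delicate computation is checking that the cross terms and the Laplacian term $\Delta_g u$ arising in $\hat{\mathcal S}$ assemble, together with the $|\hat K|^2_{\hat g}$ and $(\tr_{\hat g}\hat K)^2$ terms, into exactly the convex combination on the left side of the hypothesis while the divergence term $\hat J$ controls the right side — this requires that the exponent in $u$ and the scaling power on $\hat K$ be chosen so that the conformal weights match across all constraint quantities simultaneously. I expect the concavity/convexity of the map $t\mapsto t^{\frac{(n-2)\beta}{2}}$ (or the corresponding convexity of the weighted average defining $u$) to be exactly what converts the assumed scalar inequality into the required $\hat\mu\ge|\hat J|_{\hat g}$, with a Kato-type inequality handling the momentum densities. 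Getting all these exponents to align is where the real work lies; once they do, the positive mass theorem and its rigidity statement finish the argument.
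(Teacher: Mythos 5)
Your high-level plan coincides with the paper's: build a single interpolating initial data set whose energy--momentum vector is $\big((1-\beta)E+\beta\widetilde E,\,P\big)$, show that the hypothesis becomes the dominant energy condition for it, and invoke the spacetime positive mass theorem with rigidity. The gap is in your concrete ansatz, and it sits exactly at the step you flag as the main obstacle: the exponents do \emph{not} align. With an arithmetic combination $u=(1-\beta)+\beta v$ and $\hat g=u^{\frac{4}{n-2}}g$, linearity of the conformal Laplacian $L_gu=-\frac{4(n-1)}{n-2}\Delta_gu+\mathcal{S}u$ gives
$$
\hat{\mathcal{S}}=u^{-\frac{n+2}{n-2}}L_gu=u^{-\frac{n+2}{n-2}}\left((1-\beta)\mathcal{S}+\beta\,v^{\frac{n+2}{n-2}}\widetilde{\mathcal{S}}\right),
$$
and with $v=e^{\frac{n-2}{2}f}$ the weight on $\widetilde{\mathcal{S}}$ is $e^{\frac{n+2}{2}f}$, which is never the required weight $e^{2f}$ for $n\ge 3$. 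Since $f$ has no sign, the assumed inequality with weight $e^{2f}$ does not imply the corresponding inequality with weight $e^{\frac{n+2}{2}f}$: your conformal factor would prove a conformal theorem with different weights, not the stated one. The mismatch is worse for the extrinsic terms: your scaling $\hat K=u^{-\frac{2}{n-2}}K$ yields $|\hat K|^2_{\hat g}=u^{-\frac{12}{n-2}}|K|^2_g$ and $(\tr_{\hat g}\hat K)^2=u^{-\frac{12}{n-2}}(\tr_gK)^2$, a power of $u$ that does not match the scalar-curvature term, so $\hat\mu$ is not any weighted combination of $\mu$ and $\widetilde\mu$, and $\hat J$ acquires $\nabla u$-terms your hypothesis does not control.

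What does work is the \emph{geometric} interpolation, which is the paper's choice: $\overline g=e^{2\beta f}g=(e^{2f})^{\beta}g$ with $\overline K=e^{\beta f}K$ (conformal weight $+\tfrac12$ on $K$, not $-\tfrac12$). Then every constraint quantity scales by the single factor $e^{-2\beta f}$, and Lemma \ref{l-muJ-1} and Corollary \ref{c-muJ-1} give the identities
$$
\overline\mu=e^{-2\beta f}\left((1-\beta)\mu+\beta e^{2f}\widetilde\mu+\beta(1-\beta)(n-1)(n-2)|\nabla_gf|^2\right),
\qquad
\overline J^i=e^{-3\beta f}\left((1-\beta)J^i+\beta e^{3f}\widetilde J^i\right),
$$
from which Cauchy--Schwarz yields $|\overline J|_{\overline g}\le e^{-2\beta f}\big((1-\beta)|J|_g+\beta e^{2f}|\widetilde J|_{\widetilde g}\big)$, so the hypothesis gives $\overline\mu\ge|\overline J|_{\overline g}$ with no exponent juggling. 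The ``concavity'' you hoped would rescue the argument is precisely the nonnegative term $\beta(1-\beta)(n-1)(n-2)|\nabla_gf|^2$ generated by the geometric mean; an arithmetic mean of conformal factors produces no such term with the weights $(1-\beta,\ \beta e^{2f})$. That gradient term is also what your rigidity sketch is missing: if $(1-\beta)E+\beta\widetilde E=0$, the positive mass theorem gives $\overline\mu\equiv 0$, and the displayed identity together with the hypothesis forces $\nabla_gf\equiv0$, hence $f\equiv0$ by decay at infinity; merely knowing that $(\overline g,\overline K)$ embeds in Minkowski space does not by itself yield $f=0$. With these corrections your remaining steps (energy additivity $\overline E=(1-\beta)E+\beta\widetilde E$, invariance $\overline P=\widetilde P=P$, then the spacetime positive mass theorem) are exactly Lemma \ref{l-EP-1} and Theorem \ref{t-spacetime-pmt-1}.
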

Here as usual
\bee
\left\{
  \begin{array}{ll}
   2\mu:=&\mathcal{S}-|K|_g^2+\lf(\tr_gK\ri)^2; \\
    J^i:=&\nabla_j(K^{ij}-(\tr_gK)g^{ij}).
  \end{array}
\right.
\eee
$\wt\mu, \wt J$ are defined similarly for $\wt g, \wt K$.  For more precise statement and definitions of asymptotically flat initial data set, see section \ref{spacetime cpmt}.

We also obtain a positive mass theorem for asymptotically hyperbolic spaces.
\begin{thm}\label{t-intro-2} Let $(M^n,g)$ be an asymptotically hyperbolic space. If $n>3$, we assume the manifold is spin. Let $f$ be a smooth function so that $e^{2\beta f}g$ is also asymptotically hyperbolic. If the scalar curvature $\mathcal{S}$ of $g$ and $\wt{\mathcal{S}}$ of $e^{2f}g$, satisfy
$$
e^{-2\beta f}\lf((1-\beta)\mathcal{S}+\beta e^{2f}\wt{\mathcal{S}}\ri)\ge -n(n-1)
$$
then the mass integral $\mathbf{M}(g)$ of $g$ and the mass integral $\mathbf{M}(\wt g)$ satisfy $(1-\beta)\mathbf{M}(g)+\beta\mathbf{M}(\wt g)$ is future time like or zero. If it is zero, then $f=0$ and $(M,g)$ is isometric to the hyperbolic space.
\end{thm}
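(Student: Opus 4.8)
The plan is to adapt Witten's spinorial argument to the asymptotically hyperbolic setting and to run it simultaneously for the two conformally related metrics $g$ and $\tilde g = e^{2\beta f}g$, then combine the resulting integral identities with weights $1-\beta$ and $\beta$. The spin assumption (automatic when $n=3$, assumed when $n>3$) lets us work on the spinor bundle of $g$. The relevant operator is the modified connection adapted to the hyperbolic background, $\hat\nabla^g_X\psi = \nabla^g_X\psi - \tfrac12 c_g(X)\psi$, whose imaginary Killing spinors are the $\hat\nabla^g$-parallel ones and whose Weitzenböck formula reads $(\hat D^g)^2 = (\hat\nabla^g)^*\hat\nabla^g + \tfrac14(\mathcal S + n(n-1))$. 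The model imaginary Killing spinors on $\mathbb{H}^n$ serve as asymptotic data, and the mass vector $\mathbf M(g)\in\mathbb{R}^{1,n}$ appears as the boundary term at infinity when this formula is integrated over $M$.

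First I would fix an imaginary Killing spinor $\psi_\infty$ on the model end and solve $\hat D^g\psi = 0$ with $\psi\to\psi_\infty$ at infinity, the requisite decay being supplied by the asymptotically hyperbolic structure. Integrating the Weitzenböck identity yields
\[
\int_M\Big(|\hat\nabla^g\psi|_g^2 + \tfrac14(\mathcal S + n(n-1))|\psi|_g^2\Big)\,d\mu_g = \big\langle \mathbf M(g),\, V(\psi_\infty)\big\rangle,
\]
where $V(\psi_\infty)$ is the causal vector determined by $\psi_\infty$, and, as $\psi_\infty$ ranges over all model Killing spinors, $V(\psi_\infty)$ sweeps the future light cone. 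The companion identity for $\tilde g$ is obtained from the conformally associated spinor $\tilde\psi = e^{-\frac{n-1}{2}\beta f}\psi$ (the bundle isometry $S_g\to S_{\tilde g}$ preserving pointwise norms), and carries $\mathbf M(\tilde g)$ on its right-hand side. A genuine subtlety here is that, unlike the bare Dirac operator, the imaginary-Killing modification $-\tfrac12 c(X)$ is \emph{not} conformally covariant, so $\tilde\psi$ need not be $\hat D^{\tilde g}$-harmonic; the defect is a first-order expression in $\nabla f$ that must be retained.

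The heart of the argument is to form $(1-\beta)\times(\text{identity for }g) + \beta\times(\text{identity for }\tilde g)$ and push the $\tilde g$-integrand onto the $g$-measure, using $d\mu_{\tilde g} = e^{n\beta f}d\mu_g$, the conformal rescaling of the spinor norm, the transformation law of scalar curvature under $g\mapsto e^{2f}g$, and the explicit $\nabla f$-terms relating $\hat\nabla^{\tilde g}$ and $\hat D^{\tilde g}\tilde\psi$ to $g$-quantities. After completing the square in the connection terms (which absorbs the $\nabla f$ cross terms into the manifestly nonnegative $|\hat\nabla|^2$-contributions), the weighted sum of the curvature terms should collapse to
\[
\tfrac14\Big(e^{-2\beta f}\big((1-\beta)\mathcal S + \beta e^{2f}\tilde{\mathcal S}\big) + n(n-1)\Big)
\]
times a positive density. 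The hypothesis is precisely the assertion that this factor is nonnegative, so the combined bulk integrand is $\ge 0$, while the weighted boundary terms sum to $\big\langle (1-\beta)\mathbf M(g) + \beta\mathbf M(\tilde g),\, V(\psi_\infty)\big\rangle$. Letting $\psi_\infty$ vary over the full family of model Killing spinors forces this pairing to be nonnegative on the whole future light cone, whence $(1-\beta)\mathbf M(g) + \beta\mathbf M(\tilde g)$ is future timelike or zero by the standard causal-structure argument for the hyperbolic mass.

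I expect the main obstacle to be exactly this bookkeeping in the combination step: because the modified Dirac operator is not conformally covariant, the conformal change generates $\nabla f$-dependent cross terms that must be organized into a perfect square plus the claimed scalar-curvature factor, and one must simultaneously confirm that the weighted boundary integrals reproduce $(1-\beta)\mathbf M(g)+\beta\mathbf M(\tilde g)$ with the correct normalization of the mass vector. For the rigidity statement I would trace the equality case: if $(1-\beta)\mathbf M(g)+\beta\mathbf M(\tilde g) = 0$, then the nonnegative bulk integrand vanishes identically, forcing $\hat\nabla^g\psi\equiv 0$ for a full family of asymptotic data, so $(M,g)$ admits the maximal space of imaginary Killing spinors and is therefore isometric to $\mathbb{H}^n$; meanwhile the vanishing of the residual square term, whose coefficient is a positive multiple of $\beta(1-\beta)|\nabla f|_g^2$ for $\beta\in(0,1)$, forces $\nabla f\equiv 0$, hence $f\equiv 0$ by the decay at infinity.
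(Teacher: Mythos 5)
Your proposal never closes the step on which everything depends, and it also conflates the two conformal metrics in a way that changes the statement being proved. The conflation first: you set $\tilde g=e^{2\beta f}g$ and claim the weighted boundary terms produce $(1-\beta)\mathbf{M}(g)+\beta\mathbf{M}(\tilde g)$, but the theorem concerns $(1-\beta)\mathbf{M}(g)+\beta\mathbf{M}(\wt g)$ with $\wt g=e^{2f}g$, and its hypothesis involves $\wt{\mathcal{S}}$, the scalar curvature of $e^{2f}g$. These are genuinely different: by the mass additivity of Lemma \ref{l-AH-1}, your combination equals $(1-\beta^2)\mathbf{M}(g)+\beta^2\mathbf{M}(\wt g)$, not the claimed vector, and your ``collapse'' formula, which mixes $e^{-2\beta f}$ with $e^{2f}\wt{\mathcal{S}}$, inherits this inconsistency. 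The deeper gap is that the collapse is asserted (``should collapse to''), not computed, and the bookkeeping runs against you: with your own norm-preserving identification $\tilde\psi=e^{-\frac{n-1}{2}\beta f}\psi$ one has $|\tilde\psi|^2\,d\mu_{\tilde g}=e^{\beta f}|\psi|^2\,d\mu_g$, so the weighted sum of the two scalar-curvature bulk terms carries only \emph{one} power of the conformal factor on the second scalar curvature, whereas the hypothesis carries two ($e^{2f}\wt{\mathcal{S}}$) and a background term $-n(n-1)e^{2\beta f}$ rather than $-n(n-1)\lf[(1-\beta)+\beta e^{\beta f}\ri]$. Moreover, since (as you correctly note) the imaginary-Killing connection is not conformally covariant, your transplanted spinor is not $\hat D^{\tilde g}$-harmonic, so the companion identity acquires the term $-|\hat D^{\tilde g}\tilde\psi|^2$ with a \emph{negative} sign. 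You give no argument that these unsigned first-order defects can be organized into perfect squares leaving exactly the hypothesized combination; that is precisely where the entire difficulty of the theorem is concentrated, and it is left unresolved.

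The idea you are missing---which makes the theorem an almost immediate consequence of known results---is to apply the positive mass theorem \emph{once}, to the single interpolated metric $\ol g=e^{2\beta f}g$, instead of splicing together identities for two metrics with one non-harmonic spinor. This is what the paper does in Theorem \ref{t-cpmt-1}: by Lemma \ref{l-conformal-G-1}(i), $\ol{\mathcal{S}}=e^{-2\beta f}\lf[(1-\beta)\mathcal{S}+\beta e^{2f}\wt{\mathcal{S}}+\beta(1-\beta)(n-1)(n-2)|\nabla_g f|^2\ri]$, so the hypothesis yields $\ol{\mathcal{S}}\ge -n(n-1)$ and the asymptotically hyperbolic positive mass theorem of \cite{Wang,ChruscielHerzlich,H} applies to $(M,\ol g)$; and by Lemma \ref{l-AH-1}---proved via Herzlich's Ricci-tensor expression for the mass, a nontrivial ingredient your boundary-term discussion also glosses over---$\mathbf{M}(\ol g)=(1-\beta)\mathbf{M}(g)+\beta\mathbf{M}(\wt g)$. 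Rigidity is then clean: if the combination vanishes, equality forces $\ol{\mathcal{S}}\equiv-n(n-1)$, so the $|\nabla_g f|^2$ term vanishes, $f$ is constant, hence $f\equiv 0$ by decay at infinity, and the rigidity statement of the known theorem applied to $\ol g=g$ gives the hyperbolic space. If you wish to keep a spinorial proof, run Witten's argument for $\ol g$ with a $\hat D^{\ol g}$-harmonic spinor and then invoke these two conformal identities; that is, in effect, the paper's proof.
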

Again the precise definitions of asymptotically hyperbolic space and the mass integral are in section \ref{AH cpmt}.
We also prove a related results for asymptotically  hyperbolic spaces with inner boundary. See section \ref{AH cpmt} for more details.

Our proofs are just  applications to known results on various positive mass theorems \cite{SchoenYau,Witten,PT,ChruscielMaerten2006,BeigChruciel,Wang,ChruscielHerzlich,H}. What we have done is to compute various quantities so that one can apply known results directly.
We will point out what kind of positive mass theorems we use in appropriate places.

The organization of this note is as follows: In section \ref{spacetime cpmt} we discuss spacetime version of conformal positive mass theorems for asymptotically flat initial data sets. In section \ref{AH cpmt}, we discuss conformal positive mass theorems on asymptotically hyperbolic spaces.

{\it Acknowledgement}: The first author would like to thank Naqing Xie for some useful discussions.

\section{spacetime conformal positive mass theorem for asymptotically flat manifolds}\label{spacetime cpmt}

In this section, we obtain a spacetime version of conformal positive mass theorem on asymptotically flat (AF) manifolds.
First we recall the following facts, see \cite{LeeParker1987} for example.
\begin{lma}\label{l-conformal-1} Let $(M^n,g)$ be a Riemannian manifold. Let $f$ be a smooth
function on $M$ and let $\ol g=e^{2f}g$. Then
\bee
\left\{
  \begin{array}{ll}
    \ol\Gamma_{ij}^k-\Gamma_{ij}^k=&
  f_j\delta_i^k+f_i\delta_j^k-f_lg^{kl}g_{ij} \\
   \ol R_{ij}-R_{ij}=& -g_{ij}\Delta_gf+(2-n)f_{;ij}+(2-n)|\nabla_g f|^2g_{ij}+(n-2)f_if_j . \\
    e^{2f}\ol {\mathcal{S}}=&\mathcal{S}-2(n-1)\Delta_gf -(n-1)(n-2)|\nabla_g f|^2.
  \end{array}
\right.
\eee
Here $\Gamma_{ij}^k, \ol \Gamma_{ij}^k$ are Christoffel symbols with respect to local coordinates; $R_{ij},\ol R_{ij}$ are Ricci tensors; $ \mathcal{S}, \ol {\mathcal{S}}$ are scalar curvatures of $g, \ol g$ respectively. $f_{;ij}$ is the Hessian of $f$ with respect to $g$.
\end{lma}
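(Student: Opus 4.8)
The plan is to derive all three identities from the single coordinate definition of the Christoffel symbols, proving them in increasing order of complexity since each one feeds into the next.

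First I would establish the Christoffel identity by direct substitution. Since $\ol g_{ij}=e^{2f}g_{ij}$ and $\ol g^{ij}=e^{-2f}g^{ij}$, I insert these into $\ol\Gamma_{ij}^k=\tfrac12\ol g^{kl}(\p_i\ol g_{jl}+\p_j\ol g_{il}-\p_l\ol g_{ij})$. Each coordinate derivative expands as $\p_i\ol g_{jl}=e^{2f}(2f_ig_{jl}+\p_ig_{jl})$, so the terms $\p_ig_{jl}$ reassemble into the background symbol $\Gamma_{ij}^k$ while the factors $e^{\pm2f}$ cancel; collecting the remaining terms linear in the $f_i$ gives $f_j\delta_i^k+f_i\delta_j^k-f_lg^{kl}g_{ij}$. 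The structural point I would record is that $T_{ij}^k:=\ol\Gamma_{ij}^k-\Gamma_{ij}^k$ is a genuine $(1,2)$-tensor, symmetric in its lower indices, so everything below can be phrased invariantly through the $g$-connection $\nabla$.

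Next, for the Ricci identity I would invoke the standard variation formula for the Ricci tensor under a change of connection by a tensor, $\ol R_{ij}-R_{ij}=\nabla_kT_{ij}^k-\nabla_jT_{ik}^k+T_{ij}^kT_{kl}^l-T_{il}^kT_{jk}^l$. The linear terms are quick: the trace is $T_{ik}^k=nf_i$, so $\nabla_jT_{ik}^k=nf_{;ij}$, while differentiating the three pieces of $T$ and using symmetry of the Hessian gives $\nabla_kT_{ij}^k=2f_{;ij}-g_{ij}\Delta_gf$. The quadratic terms are the only genuine computation: $T_{ij}^kT_{kl}^l=n(2f_if_j-|\nabla_gf|^2g_{ij})$ is immediate, whereas $T_{il}^kT_{jk}^l$ requires expanding the product of two three-term expressions and contracting the $\delta$'s and $g$'s. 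This is where I expect all the bookkeeping risk to lie; I would organize it as a $3\times3$ array of the nine contractions, which collapses to $(n+2)f_if_j-2|\nabla_gf|^2g_{ij}$. Adding the four contributions and grouping the $f_{;ij}$, $g_{ij}\Delta_gf$, $f_if_j$, and $|\nabla_gf|^2g_{ij}$ coefficients then yields the stated Ricci formula.

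Finally, the scalar-curvature identity follows by tracing with $\ol g^{ij}=e^{-2f}g^{ij}$. Writing $\ol{\mathcal S}=e^{-2f}g^{ij}\left(R_{ij}+(\ol R_{ij}-R_{ij})\right)$ and using $g^{ij}g_{ij}=n$, $g^{ij}f_{;ij}=\Delta_gf$, and $g^{ij}f_if_j=|\nabla_gf|^2$, the Laplacian coefficients combine to $-2(n-1)$ and the gradient-squared coefficients to $-(n-1)(n-2)$, giving $e^{2f}\ol{\mathcal S}=\mathcal S-2(n-1)\Delta_gf-(n-1)(n-2)|\nabla_gf|^2$. The main obstacle throughout is not conceptual but purely the index bookkeeping in $T_{il}^kT_{jk}^l$; a clean way to reduce the clutter is to fix a point and work in $g$-normal coordinates there, where $\Gamma_{ij}^k=0$ kills the background quadratic terms and reduces the computation of $\ol R_{ij}-R_{ij}$ to $\p_kT_{ij}^k-\p_jT_{ik}^k+T_{ij}^kT_{kl}^l-T_{il}^kT_{jk}^l$ evaluated at that point.
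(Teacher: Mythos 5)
Your proposal is correct, but there is nothing in the paper to compare it against: the paper does not prove Lemma \ref{l-conformal-1} at all, it simply recalls these identities as known facts with a citation to Lee--Parker \cite{LeeParker1987}. Your argument is therefore a self-contained substitute for that citation, and it checks out. The Christoffel computation is right, and the tensorial variation formula
$\ol R_{ij}-R_{ij}=\nabla_kT_{ij}^k-\nabla_jT_{ik}^k+T_{ij}^kT_{kl}^l-T_{il}^kT_{jk}^l$
is the standard one for a connection shifted by a symmetric $(1,2)$-tensor, which your $T_{ij}^k=f_i\delta_j^k+f_j\delta_i^k-f^kg_{ij}$ is. All four ingredients you quote are correct: $T_{ik}^k=nf_i$, $\nabla_kT_{ij}^k=2f_{;ij}-g_{ij}\Delta_gf$, $T_{ij}^kT_{kl}^l=n\lf(2f_if_j-|\nabla_gf|^2g_{ij}\ri)$, and the nine-term contraction $T_{il}^kT_{jk}^l=(n+2)f_if_j-2|\nabla_gf|^2g_{ij}$ (the coefficient of $f_if_j$ is $n+1-1+1+1-1+1=n+2$ and of $|\nabla_gf|^2g_{ij}$ is $-2$). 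Summing gives $(2-n)f_{;ij}-g_{ij}\Delta_gf+(n-2)f_if_j+(2-n)|\nabla_gf|^2g_{ij}$, as stated, and the trace with $e^{-2f}g^{ij}$ produces the coefficients $-2(n-1)$ and $-(n-1)(n-2)$ exactly as you say. Your closing remark about working in $g$-normal coordinates at a point is a legitimate simplification, since both sides of the Ricci identity are tensors. The only caution worth recording is that the quadratic terms in the variation formula depend on the curvature sign convention pairing correctly with the definition of Ricci; with the convention you implicitly use (Ricci of the round sphere positive) the signs are consistent, as confirmed by the final answer matching the Lee--Parker formula.
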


\begin{definition}\label{d-initialdata-1} $(M^n,g,K)$ is said to be an asymptotically flat initial data set if $g$ is a smooth complete metric on $M$, $K$ is smooth a symmetric (0,2)   tensor  such that outside a compact set $M$ consists of finitely many ends. Moreover each end $N$ is diffeomorphic to $\R^n\setminus B(R)$ for some $R$ so that in the standard coordinates of $\R^n$, near infinity
$$
|g_{ij}-\delta_{ij}|+r|\p_k g_{ij}|+r|K_{ij}|\le Cr^{-\tau}
$$
for some $\tau>\max(1/2, n-3)$ for all $i, j, k$, where $r=|x|$ is the Euclidean distance from the origin. In addition, near infinity
$$
|\mu|, |J|_g\le C r^{-n-\e}
$$
for some $\e>0$, where
\be\label{e-muJ-1}
\left\{
  \begin{array}{ll}
   2\mu:=&\mathcal{S}-|K|_g^2+\lf(\tr_gK\ri)^2; \\
    J^i:=&\nabla_j(K^{ij}-(\tr_gK)g^{ij}).
  \end{array}
\right.
\ee
\end{definition}

Let $(M,g,K)$ be as in the definition.
Let $f$ be a smooth function and let $0<\beta\le 1$. Consider the metric $\ol g=e^{2\beta f}g$ and $\ol K=e^{\beta f}K$. Define $\ol \mu, \ol J$ as in the \eqref{e-muJ-1} with respect to $\ol g, \ol K$. Then we have the following relation. 
\begin{lma}\label{l-muJ-1} With the above notations,
$$
2\ol\mu=e^{-2\beta f}\lf(2\mu+\lf\{-2(n-1)\beta\Delta_g f-(n-1)(n-2)\beta^2|\nabla_gf|^2\ri\}\ri),
$$
and
$$
e^{3\beta f}\ol J^i=J^i- \beta \lf((n-1)f_jK^{ij} -2f_lg^{il}\tr_gK\ri).
$$
\end{lma}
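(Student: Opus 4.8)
The plan is to verify the two identities by direct computation, handling the scalar (Hamiltonian) part and the momentum part separately, and invoking Lemma \ref{l-conformal-1} --- applied with conformal factor $\beta f$ in place of $f$, since $\ol g = e^{2\beta f}g$ --- wherever the intrinsic geometry of $\ol g$ enters.

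For the formula for $\ol\mu$, I would record how each piece of $2\ol\mu = \ol{\mathcal S} - |\ol K|_{\ol g}^2 + (\tr_{\ol g}\ol K)^2$ transforms. The scalar curvature term is supplied directly by the third identity of Lemma \ref{l-conformal-1} with $f$ replaced by $\beta f$, giving $e^{2\beta f}\ol{\mathcal S} = \mathcal S - 2(n-1)\beta\Delta_g f - (n-1)(n-2)\beta^2|\nabla_g f|^2$. For the other two terms only the conformal weights are needed: since $\ol K_{ij} = e^{\beta f}K_{ij}$ and $\ol g^{ij} = e^{-2\beta f}g^{ij}$, raising indices gives $|\ol K|_{\ol g}^2 = e^{-2\beta f}|K|_g^2$ and $\tr_{\ol g}\ol K = e^{-\beta f}\tr_g K$, so $(\tr_{\ol g}\ol K)^2 = e^{-2\beta f}(\tr_g K)^2$. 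Adding the three contributions and factoring out $e^{-2\beta f}$ reassembles $2\mu = \mathcal S - |K|_g^2 + (\tr_g K)^2$ and leaves precisely the bracketed Laplacian/gradient terms. This part is pure bookkeeping and should cause no trouble.

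For the momentum density $\ol J^i = \ol\nabla_j\lf(\ol K^{ij} - (\tr_{\ol g}\ol K)\ol g^{ij}\ri)$, the key observation is that the tensor being differentiated is merely a conformal multiple of its unbarred analogue. Writing $\pi^{ij} = K^{ij} - (\tr_g K)g^{ij}$, so that $J^i = \nabla_j\pi^{ij}$, the weight computations above give $\ol K^{ij} = e^{-3\beta f}K^{ij}$ and $(\tr_{\ol g}\ol K)\ol g^{ij} = e^{-3\beta f}(\tr_g K)g^{ij}$, hence $\ol\pi^{ij} := \ol K^{ij} - (\tr_{\ol g}\ol K)\ol g^{ij} = e^{-3\beta f}\pi^{ij}$. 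I would then expand $\ol\nabla_j(e^{-3\beta f}\pi^{ij})$ by splitting the barred connection as $\ol\Gamma = \Gamma + C$, where the first identity of Lemma \ref{l-conformal-1} (with $\beta f$) gives $C_{ab}^c = \beta\lf(f_b\delta_a^c + f_a\delta_b^c - f_l g^{cl}g_{ab}\ri)$. The divergence of a symmetric $(2,0)$ tensor then acquires the two contractions $C_{jm}^i\,\ol\pi^{mj}$ and $C_{jm}^j\,\ol\pi^{im}$; using $C_{jm}^j = \beta n f_m$ and the trace $\tr_g\pi = (1-n)\tr_g K$, together with the derivative falling on the factor $e^{-3\beta f}$, one collects everything into $J^i$ plus terms of first order in $f$, and factoring out $e^{-3\beta f}$ yields the asserted identity.

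I expect the momentum identity to be the only genuine obstacle, and within it the delicate points are (i) keeping the conformal weights straight when both indices of $K$ are raised with $\ol g$, and (ii) correctly combining the three sources of first-order terms --- the derivative of $e^{-3\beta f}$, the $C_{jm}^i$ correction, and the trace correction $C_{jm}^j$ --- several of which interact with the $(\tr_g K)f^i$ pieces arising from the trace part of $\pi^{ij}$. A helpful consistency check along the way is that the purely metric contribution recombines, via $\nabla_j g^{ij} = 0$, back into the unbarred divergence $\nabla_j\pi^{ij} = J^i$, which cleanly isolates the new lower-order terms and guards against sign errors in the Christoffel correction.
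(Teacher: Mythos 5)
Your treatment of the first identity is correct and is exactly the paper's argument: apply Lemma \ref{l-conformal-1} with conformal factor $\beta f$ to get $e^{2\beta f}\ol{\mathcal S}$, and use the conformal weights $|\ol K|^2_{\ol g}=e^{-2\beta f}|K|^2_g$ and $(\tr_{\ol g}\ol K)^2=e^{-2\beta f}(\tr_g K)^2$; nothing to add there.

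For the second identity there is a genuine problem, though not where you expect it. Your plan --- set $\pi^{ij}=K^{ij}-(\tr_gK)g^{ij}$, note $\ol K^{ij}-(\tr_{\ol g}\ol K)\ol g^{ij}=e^{-3\beta f}\pi^{ij}$, and expand $\ol\nabla_j(e^{-3\beta f}\pi^{ij})$ using $C^c_{ab}=\beta\lf(f_a\delta^c_b+f_b\delta^c_a-f_lg^{cl}g_{ab}\ri)$ --- is sound, and in fact cleaner than the paper's proof, which expands $\ol\nabla_j\ol K^{ij}$ and $\ol\nabla_j\lf((\tr_{\ol g}\ol K)\ol g^{ij}\ri)$ separately. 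But if you carry out the collection you announce in your last sentence, you do \emph{not} land on the asserted identity. Writing $f^i:=g^{il}f_l$, the three first-order contributions are $-3\beta f_j\pi^{ij}$ (from the exponential), $\beta\lf(2f_j\pi^{ij}+(n-1)f^i\tr_gK\ri)$ (from $C^i_{jm}\pi^{mj}$, using $\tr_g\pi=(1-n)\tr_gK$), and $n\beta f_j\pi^{ij}$ (from $C^j_{jm}\pi^{im}$). Since $f_j\pi^{ij}=f_jK^{ij}-f^i\tr_gK$, the sum is $(n-1)\beta f_j\pi^{ij}+(n-1)\beta f^i\tr_gK=(n-1)\beta f_jK^{ij}$: the $\tr_gK$ terms cancel identically, and one obtains
\bee
e^{3\beta f}\ol J^i=J^i+\beta(n-1)f_jK^{ij},
\eee
which differs from the stated formula $J^i-\beta\lf((n-1)f_jK^{ij}-2f_lg^{il}\tr_gK\ri)$ both in sign and in the surviving $\tr_gK$ term. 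The stated identity is in fact false: take $g$ flat and $K_{ij}=\delta_{ij}$ on $\R^n$, so $J^i=0$; a direct computation gives $e^{3\beta f}\ol J^i=(n-1)\beta f^i$, matching the corrected formula, whereas the stated one would give $(n+1)\beta f^i$. The discrepancy originates in the paper itself: in passing from \eqref{e-J-1} to \eqref{e-J-2}, the combination $-3\beta f_jK^{ij}+\beta(n+2)f_jK^{ij}=+\beta\lf((n-1)f_jK^{ij}\ri)$ is recorded with a minus sign, and that sign error propagates into the lemma. So your proposal cannot be completed as written --- the step ``yields the asserted identity'' is precisely the step that fails --- but the failure reflects an error in the target, not in your method. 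Note also that Corollary \ref{c-muJ-1} and everything downstream survive unchanged: both the corrected and the stated formulas have the form $e^{3\beta f}\ol J^i=J^i+\beta X^i$ with $X^i$ independent of $\beta$, and the corollary uses only this structure (substituting $\beta=1$ identifies $X^i=e^{3f}\wt J^i-J^i$, whichever $X^i$ is).
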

\begin{proof} Let $\ol{\mathcal{S}}$ be the scalar curvature of $\ol g$, then by Lemma \ref{l-conformal-1},
\bee
\begin{split}
2e^{2\beta f}\ol\mu=& e^{2\beta f}\lf(\ol{\mathcal{S}}-|\ol K|^2_{\ol g}+(\tr_{\ol g}\ol K)^2 \ri)\\
=&\mathcal{S}+\lf\{-2(n-1)\beta\Delta_g f-(n-1)(n-2)\beta^2|\nabla_gf|^2\ri\}-|K|^2_g+(\tr_{g}K)^2\\
=&2\mu+ \lf\{-2(n-1)\beta\Delta_g f-(n-1)(n-2)\beta^2|\nabla_gf|^2\ri\}.
\end{split}
\eee
Let $\ol\nabla$ be the covariant derivative with respect to $\ol g$.
\be\label{e-J-1}
\begin{split}
\ol\nabla_j\ol K^{ij}=&\ol\nabla_j(e^{-3\beta f} K^{ij})\\
=&  e^{-3\beta f}\lf(-3\beta f_j K^{ij}+\nabla_j K^{ij}+(\ol \nabla_j-\nabla_j)K^{ij}\ri).
\end{split}
\ee
In local coordinates, by Lemma \ref{l-conformal-1}, let
$$
A_{ij}^k:=\ol\Gamma_{ij}^k-\Gamma_{ij}^k= \beta \lf(f_j\delta_i^k+f_i\delta_j^k-f_lg^{kl}g_{ij}\ri).
$$
 Then
\bee
\begin{split}
(\ol \nabla_j-\nabla_j)K^{ij}=& A_{jk}^iK^{kj}+A_{jk}^jK^{ik}\\
=&\beta\lf(f_j\delta_k^i+f_k\delta_j^i-f_lg^{il}g_{kj}\ri)K^{kj}
+\beta\lf(f_k\delta_j^j+f_j\delta_k^j-f_lg^{jl}g_{jk}\ri)K^{ik}\\
=&\beta\lf(f_jp^{ij}+f_kp^{ik}-f_lg^{il}\tr_gK\ri)+\beta\lf(nf_kK^{ik}+f_jK^{ij}-f_lK^{il}\ri)\\
=&\beta\lf((n+2)f_jK^{ij} -f_lg^{il}\tr_gK \ri).
\end{split}
\eee
By \eqref{e-J-1}, we have
\be\label{e-J-2}
e^{3\beta f}\ol\nabla_j\ol K^{ij}=\nabla_jK^{ij}-\beta\lf((n-1)f_jK^{ij}-f_lg^{il}\tr_gK\ri).
\ee

On the other hand,
\be\label{e-J-3}
\begin{split}
\ol\nabla_j\lf(\tr_{\ol g} \ol K) \ol g^{ij}\ri)=&(\tr_{\ol g}\ol K)_j \ol g^{ij}\\
=& \lf(e^{-\beta f}\tr_gK\ri)_je^{-2\beta f}g^{ij} \\
=&e^{-3 \beta f}\lf(-\beta f_j\tr_gK+(\tr_gK)_j\ri)g^{ij}
\end{split}
\ee
By \eqref{e-J-2}, \eqref{e-J-3}, we have
\bee
\begin{split}
e^{3\beta f}\ol J^i=J^i- \beta \lf((n-1)f_jK^{ij} -2f_lg^{il}\tr_gK\ri).
\end{split}
\eee
\end{proof}

 \begin{cor}\label{c-AFinitial-1} Suppose $(M^n,g,K)$ is an AF initial data set and let $\tau$ and $\e>0$ be as in Definition \ref{d-initialdata-1}.  Let $f$ be a smooth function on $M$.  Suppose at each end
 $$
  |f|+r|\nabla_gf|+r|\Delta_gf|^\frac12\le Cr^{-\tau}
  $$
  for some $C>0$ where $r=|x|$ and $x$ is the coordinates of an end in the definition of AF initial data set. Then for any $0<\beta\le 1$, $(M^n,e^{2\beta f}g, e^{\beta f}K)$ is also an AF the initial data set.
\end{cor}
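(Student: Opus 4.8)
The plan is to verify each requirement of Definition \ref{d-initialdata-1} in turn for the rescaled data $(\ol g,\ol K)=(e^{2\beta f}g,e^{\beta f}K)$. The topological and regularity conditions are inherited directly: the ends and the diffeomorphisms $N\cong\R^n\setminus B(R)$ are unchanged, while $\ol g$ and $\ol K$ are smooth because $f,g,K$ are. Completeness of $\ol g$ follows because $e^{2\beta f}$ is a smooth positive function bounded below by a positive constant---by compactness on the core and because $f\to0$ on each end---so $\ol g\ge c_0 g$ for some $c_0>0$ and completeness passes from $g$ to $\ol g$.

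First I would record the pointwise consequences of the hypothesis $|f|+r|\nabla_gf|+r|\Delta_gf|^{1/2}\le Cr^{-\tau}$: near each end $|f|\le Cr^{-\tau}$, $|\nabla_gf|\le Cr^{-\tau-1}$, and $|\Delta_gf|\le Cr^{-2\tau-2}$, hence also $|\nabla_gf|^2\le Cr^{-2\tau-2}$. Since $g$ is asymptotically flat, $g^{ij}\approx\delta^{ij}$ near infinity, so the coordinate partials satisfy $|f_k|\le C|\nabla_gf|\le Cr^{-\tau-1}$, and raising or lowering indices on $K$ is harmless. Next I would check the first-order decay $|\ol g_{ij}-\delta_{ij}|+r|\p_k\ol g_{ij}|+r|\ol K_{ij}|\le Cr^{-\tau}$: writing $\ol g_{ij}-\delta_{ij}=(e^{2\beta f}-1)g_{ij}+(g_{ij}-\delta_{ij})$ and using $|e^{2\beta f}-1|\le C|f|$ gives the zeroth-order bound; differentiating $\ol g_{ij}=e^{2\beta f}g_{ij}$ produces a term $2\beta f_k e^{2\beta f}g_{ij}$ controlled by $|\nabla_gf|$ and a term $e^{2\beta f}\p_kg_{ij}$ controlled by $|\p_kg_{ij}|$, so $r|\p_k\ol g_{ij}|\le Cr^{-\tau}$; and $\ol K_{ij}=e^{\beta f}K_{ij}$ with $e^{\beta f}$ bounded gives $r|\ol K_{ij}|\le Cr^{-\tau}$. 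Since the value of $\tau$ is unchanged, the constraint $\tau>\max(1/2,n-3)$ is automatically retained.

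Finally, the decay of $\ol\mu$ and $\ol J$ is the crux, and here I would invoke Lemma \ref{l-muJ-1}. In the formula for $2\ol\mu$ the factor $e^{-2\beta f}$ is bounded and the new terms are $\Delta_gf$ and $|\nabla_gf|^2$, so $|\ol\mu|\le C(|\mu|+|\Delta_gf|+|\nabla_gf|^2)\le C(r^{-n-\e}+r^{-2\tau-2})$; likewise the formula for $e^{3\beta f}\ol J^i$ adds only terms of the form $f_jK^{ij}$ and $f_l\tr_gK$, so after comparing the $\ol g$- and $g$-norms one gets $|\ol J|_{\ol g}\le C(|J|_g+|\nabla_gf|\,|K|_g)\le C(r^{-n-\e}+r^{-2\tau-2})$. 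The main (and only genuinely delicate) point is to confirm that these quadratic and second-order error terms decay faster than $r^{-n}$, i.e. that $2\tau+2>n$. This follows from the standing assumption $\tau>\max(1/2,n-3)$: for $n\ge4$ one has $n-3\ge n/2-1$, while for $n=3$ one has $\max(1/2,0)=1/2=n/2-1$, so in every dimension $\tau>n/2-1$, equivalently $2\tau+2>n$. Setting $\e'=\min(\e,\,2\tau+2-n)>0$ then yields $|\ol\mu|,|\ol J|_{\ol g}\le Cr^{-n-\e'}$, which completes the verification that $(M^n,e^{2\beta f}g,e^{\beta f}K)$ is an AF initial data set for every $0<\beta\le1$.
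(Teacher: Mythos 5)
Your proof is correct and follows essentially the same route as the paper: verify the decay of $\ol g_{ij}-\delta_{ij}$, $\p_k\ol g_{ij}$, $\ol K_{ij}$ directly, note that $2\tau+2>n$, and apply Lemma \ref{l-muJ-1} to control $\ol\mu$ and $\ol J$. The only difference is that you spell out the details (including the dimension-by-dimension check that $\tau>\max(1/2,n-3)$ implies $2\tau+2>n$) which the paper dismisses as ``easy to see.''
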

\begin{proof}
Let $\bar g=e^{2\beta f}g$ and $\bar K=e^{\beta f}K$, then at each end, by the assumptions on $f$, it is easy to see that near infinity
$$
|\bar g_{ij}-\delta_{ij}|+r|\p_k\bar g_{ij}|+r|\bar K_{ij}|\le Cr^{-\tau}
$$
for some $C>0$ for all $i,j,k$. On the other hand, it is easy to see that
  $
  2+2\tau>n.
  $
 The results follows from Lemma \ref{l-muJ-1}.

\end{proof}
\begin{cor}\label{c-muJ-1} With the notations as in Lemma \ref{l-muJ-1}, let $\wt g=e^{2f}g$, $\wt K=e^fK$. Let $\wt \mu$, $\wt J$ as in \eqref{e-muJ-1} defined in terms of $\wt g$, then
$$
 \bar\mu
= e^{-2\beta f}\lf((1-\beta)\mu+ \beta e^{2f}\wt\mu+\beta(1-\beta)(n-1)(n-2)  |\nabla_gf|^2\ri),
$$
$$
\ol J^i=e^{-3\beta f}(1-\beta)J^i+\beta e^{3(1-\beta)f}\wt J^i,
$$
and
$$
|\ol J|_{\ol g}\le e^{-2\beta f}\lf((1-\beta)|J|_g+\beta e^{2f}|\wt J|_{\wt g}\ri).
$$
\end{cor}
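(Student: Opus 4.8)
The plan is to apply Lemma \ref{l-muJ-1} twice: once with the given parameter $\beta$, and once with $\beta=1$. Setting $\beta=1$ in the constructions $\ol g=e^{2\beta f}g$, $\ol K=e^{\beta f}K$ reproduces exactly $\wt g=e^{2f}g$, $\wt K=e^fK$, so the $\beta=1$ case of Lemma \ref{l-muJ-1} is precisely the relation between $\wt\mu,\wt J$ and $\mu,J$. The three asserted formulas then follow by using the $\beta=1$ identities to eliminate the ``non-conformal'' quantities $\Delta_gf$ and the contraction $(n-1)f_jK^{ij}-2f_lg^{il}\tr_gK$ from the general-$\beta$ identities.

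For the scalar identity, I would first read off from the first line of Lemma \ref{l-muJ-1} the $\beta=1$ relation
\bee
2e^{2f}\wt\mu=2\mu-2(n-1)\Delta_gf-(n-1)(n-2)|\nabla_gf|^2,
\eee
and solve it for $(n-1)\Delta_gf$. Substituting this into the general-$\beta$ identity $2\ol\mu=e^{-2\beta f}\lf(2\mu-2(n-1)\beta\Delta_gf-(n-1)(n-2)\beta^2|\nabla_gf|^2\ri)$ replaces $-2(n-1)\beta\Delta_gf$ by $\beta$ times the right-hand side above. The $\mu$-terms then collect with coefficient $(1-\beta)$, the $\wt\mu$-terms with coefficient $\beta e^{2f}$, and the two quadratic gradient contributions $\beta(n-1)(n-2)|\nabla_gf|^2$ and $-(n-1)(n-2)\beta^2|\nabla_gf|^2$ combine into a single term proportional to $\beta(1-\beta)(n-1)(n-2)|\nabla_gf|^2$, which (after dividing by $2$) yields the stated formula.

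For the momentum identity the same device is cleaner, since no derivatives of the conformal factor enter. Setting $\beta=1$ in the second line of Lemma \ref{l-muJ-1} gives $e^{3f}\wt J^i=J^i-\lf((n-1)f_jK^{ij}-2f_lg^{il}\tr_gK\ri)$, so the contraction equals $J^i-e^{3f}\wt J^i$. Feeding this into the general-$\beta$ identity $e^{3\beta f}\ol J^i=J^i-\beta\lf((n-1)f_jK^{ij}-2f_lg^{il}\tr_gK\ri)$ produces $e^{3\beta f}\ol J^i=(1-\beta)J^i+\beta e^{3f}\wt J^i$, and multiplying by $e^{-3\beta f}$ gives $\ol J^i=e^{-3\beta f}(1-\beta)J^i+\beta e^{3(1-\beta)f}\wt J^i$.

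Finally, for the norm bound I would track the conformal weights of a contravariant vector. Since $\ol g=e^{2\beta f}g$ one has $|\ol J|_{\ol g}=e^{\beta f}|\ol J|_g$, and similarly $|\wt J|_g=e^{-f}|\wt J|_{\wt g}$. Because $0<\beta\le1$ forces $1-\beta\ge0$ and $\beta\ge0$, the triangle inequality applied to the momentum identity bounds $|\ol J|_g$ by $e^{-3\beta f}(1-\beta)|J|_g+\beta e^{3(1-\beta)f}|\wt J|_g$; multiplying by $e^{\beta f}$ and rewriting $|\wt J|_g$ through $|\wt J|_{\wt g}$ collapses both exponents to $e^{-2\beta f}$, giving the claimed inequality. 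The only step demanding care is this exponent bookkeeping for the vector norms; everything else is direct substitution, so I expect no genuine obstacle.
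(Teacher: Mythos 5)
Your proposal is correct and is essentially the paper's own proof: the paper likewise obtains the first identity by setting $\beta=1$ in Lemma \ref{l-muJ-1}, solving for $-2(n-1)\Delta_gf$, and substituting into the general-$\beta$ formula; it proves the second identity ``similarly'' (exactly your elimination of the contraction $(n-1)f_jK^{ij}-2f_lg^{il}\tr_gK$); and its proof of the norm bound --- expanding $|\ol J|^2_{\ol g}$ and applying Cauchy--Schwarz to the cross term --- is just your triangle-inequality argument written out in coordinates. One caveat: carried out exactly, both your substitution and the paper's yield the gradient term with coefficient $\tfrac12\beta(1-\beta)(n-1)(n-2)$, because the overall factor of $2$ in $2\ol\mu$, $2\mu$, $2e^{2f}\wt\mu$ cancels on the $\mu$- and $\wt\mu$-terms but halves the gradient term; so the coefficient $\beta(1-\beta)(n-1)(n-2)$ in the Corollary's statement (which you claim to recover exactly) is a factor-of-$2$ typo in the paper, harmless downstream since only the nonnegativity of that term is used, e.g.\ in Theorem \ref{t-spacetime-pmt-1}.
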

\begin{proof} By Lemma \ref{l-muJ-1} with $\beta=1$, we have
$$
-2(n-1)\Delta_g f=2e^{2f}\wt \mu-2\mu+(n-1)(n-2)|\nabla f|^2.
$$
By
Lemma \ref{l-muJ-1} again,  we conclude that the first equality in the lemma holds. The second equality can be proved similarly.

To prove the last inequality,
\bee
\begin{split}
|\ol J|^2_{\ol g}=&\ol g_{ij}\ol J^i\ol J^j\\
=&e^{2\beta f}g_{ij}\lf[e^{-3\beta f}(1-\beta)J^i+\beta e^{3(1-\beta)f}\wt J^i\ri]\lf[e^{-3\beta f}(1-\beta)J^j+\beta e^{3(1-\beta)f}\wt J^j\ri]\\
=&e^{-4\beta f}\lf((1-\beta)^2|J|^2_g+\beta^2e^{4f}|\wt J|^2_{\wt g}+2\beta(1-\beta)e^{3f}g_{ij}\wt J^i J^j\ri)\\
\le &e^{-4\beta f}\lf((1-\beta)^2|J|^2_g+\beta^2e^{4f}|\wt J|^2_{\wt g}+2\beta(1-\beta)e^{2f} |\wt J|_{\wt g}|J| \ri)\\
=&e^{-4\beta f}\lf((1-\beta)|J|_g+\beta e^{2f}|\wt J|_{\wt g}\ri)^2.
\end{split}
\eee
The result follows.
\end{proof}

Let us recall the ADM energy momentum vector of an AF initial data set.

\begin{definition}\label{d-definition-EP-1} Let $(M^n,g,K)$ be an AF initial data set. At each end, in AF coordinate chart, the  ADM energy $E$ and the ADM momentum $P$ are defined as follows:
\bee
E=\frac1{2(n-1)\omega_{n-1}}\lim_{r\to\infty}\int_{S_r}\sum_{i,j=1}^n(g_{ij,i}-g_{ii,j})\nu_0^j d\sigma_r
\eee

\bee
P_i=\frac1{(n-1)\omega_{n-1}}\lim_{r\to\infty}\int_{S_r}\sum_{ j=1}^n(K_{ij}-(\tr_gK) g_{ij}) \nu_0^j d\sigma_r
\eee
$i=1,\dots, n$. Here $S_r$ is the coordinate sphere $\{|x|=r\}$, $\nu_0$ is the Euclidean unit outward normal of $S_r$ and $d\sigma_r$ is the area element on $S_r$ induced by the Euclidean metric and $\omega_{n-1}$ is the volume of the unit sphere in $\R^n$.
\end{definition}

\begin{lma}\label{l-EP-1}
Let $(M^n,g, K)$ be an AF initial data set. Fix an end $\mathcal{E}$ of $M$. Let $\tau$, $\e$ be as in Definition \ref{d-initialdata-1}. Let $f$  be  a smooth function such that
 $$
  |f|+r|\nabla_gf|+r|\Delta_gf|^\frac12\le Cr^{-\tau}
  $$
  for some $C>0$ where $r=|x|$ and $x$ is the coordinates of $\mathcal{E}$ in the definition of AF initial data set. Let $0<\beta<1$, and let $\ol g=e^{2\beta f}g$, $\ol K=e^{\beta f}K$; $\wt g=e^{2f}g$, $\wt K=e^{2f}K$. Let $(E,P)$, $(\ol E, \ol P)$, $(\wt E,\wt P)$ be the energy-momentum vectors at $\mathcal{E}$ of  $(M,g,K)$, $(M,\ol g,\ol K)$ and $(M,\wt g,\wt K)$ respectively. Then
\bee
 \ol E=(1-\beta)E+ \beta  \wt E;\ \ \ol P=\wt P=P.
\eee

\end{lma}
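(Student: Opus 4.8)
The plan is to feed the conformal factors directly into the ADM surface integrals of Definition \ref{d-definition-EP-1} and keep track of which terms survive the limit $r\to\infty$, using the decay hypotheses to discard the rest. Write $c=e^{2\beta f}$, so that $\ol g_{ij}=c\,g_{ij}$ and $\ol g_{ij,k}=c_{,k}g_{ij}+c\,g_{ij,k}$. Substituting into the definition of $\ol E$ and regrouping, the integrand splits as
$$
\sum_{i,j}(\ol g_{ij,i}-\ol g_{ii,j})\nu_0^j
= c\sum_{i,j}(g_{ij,i}-g_{ii,j})\nu_0^j
+ \sum_{j}\Big(\sum_i c_{,i}g_{ij}-c_{,j}\sum_i g_{ii}\Big)\nu_0^j .
$$
In the first piece $c=1+O(r^{-\tau})$ while $g_{ij,i}-g_{ii,j}=O(r^{-\tau-1})$, so the deviation $(c-1)(\,\cdots)$ integrates to $O(r^{\,n-2-2\tau})$, which vanishes because $\tau>\max(1/2,n-3)$ forces $2\tau>n-2$; hence this piece contributes exactly $E$.

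For the second piece I would use $c_{,i}=2\beta f_i\,c\to 2\beta f_i$ together with $g_{ij}\to\delta_{ij}$ and $\sum_i g_{ii}\to n$, all errors being again of order $r^{-2\tau-1}$. The integrand then tends to $\big(2\beta f_j-2\beta n f_j\big)\nu_0^j=2\beta(1-n)f_j\nu_0^j$, so that
$$
\ol E = E + \frac{2\beta(1-n)}{2(n-1)\omega_{n-1}}\lim_{r\to\infty}\int_{S_r}\sum_j f_j\nu_0^j\,d\sigma_r
= E-\beta\,\Phi,
\qquad
\Phi:=\frac{1}{\omega_{n-1}}\lim_{r\to\infty}\int_{S_r}\sum_j f_j\nu_0^j\,d\sigma_r .
$$
This limit exists because $\ol E$ and $E$ both exist (the data are AF by Corollary \ref{c-AFinitial-1}) and the error terms converge to zero, and crucially $\Phi$ depends only on $f$, not on $\beta$. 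The decisive observation is that $\wt g=e^{2f}g$ is precisely the $\beta=1$ instance of the family $\ol g=e^{2\beta f}g$; applying the identity $\ol E=E-\beta\Phi$ at $\beta=1$ yields $\wt E=E-\Phi$, whence $\Phi=E-\wt E$ and
$$
\ol E=E-\beta(E-\wt E)=(1-\beta)E+\beta\wt E .
$$
This avoids ever evaluating $\Phi$ explicitly.

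For the momenta I would exploit the pointwise identity
$$
\ol K_{ij}-(\tr_{\ol g}\ol K)\,\ol g_{ij}=e^{\beta f}\big(K_{ij}-(\tr_g K)g_{ij}\big),
$$
which follows from $\ol g^{ij}=e^{-2\beta f}g^{ij}$, $\tr_{\ol g}\ol K=e^{-\beta f}\tr_g K$ and $\ol g_{ij}=e^{2\beta f}g_{ij}$. Since $e^{\beta f}=1+O(r^{-\tau})$ and $K_{ij}-(\tr_g K)g_{ij}=O(r^{-\tau-1})$, the factor $e^{\beta f}$ may be replaced by $1$ up to an error of order $r^{-2\tau-1}$ that integrates to zero; hence $\ol P=P$. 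The same computation applied to $\wt K$ (the conformal factor again tending to $1$) gives $\wt P=P$, completing the momentum statement.

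The only genuinely technical point to check carefully is that every error term — those arising from $c-1$, from $g_{ij}-\delta_{ij}$, and from $e^{\beta f}-1$ — is of order $r^{-2\tau-1}$ and therefore drops out after integration over $S_r$ and passage to the limit; this is exactly where the hypothesis $\tau>\max(1/2,n-3)$, equivalently $2\tau>n-2$, is used. Beyond these routine asymptotic estimates, the one nonroutine idea is the $\beta=1$ specialization that identifies $\Phi$ with $E-\wt E$.
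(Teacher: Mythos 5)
Your proposal is correct, and it differs from the paper in an interesting way: it proves more than the paper actually proves. For the momentum identity you do exactly what the paper does — the pointwise identity $\ol K_{ij}-(\tr_{\ol g}\ol K)\ol g_{ij}=e^{\beta f}\bigl(K_{ij}-(\tr_g K)g_{ij}\bigr)$, followed by the observation that the correction $(e^{\beta f}-1)\bigl(K_{ij}-(\tr_g K)g_{ij}\bigr)=O(r^{-1-2\tau})$ integrates to zero since $\tau>\max(1/2,n-3)$ forces $1+2\tau>n-1$; the paper's treatment of $\ol P=\wt P=P$ is precisely this argument. For the energy identity, however, the paper gives no computation at all: it simply cites Simon \cite{S} (see also \cite{MS}, \cite{G}) for the relation $\ol E=(1-\beta)E+\beta\wt E$. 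You instead derive it from Definition \ref{d-definition-EP-1}: splitting the ADM integrand of $\ol g=e^{2\beta f}g$ into $c\sum(g_{ij,i}-g_{ii,j})\nu_0^j$ plus the $\nabla c$ terms gives $\ol E=E-\beta\Phi$ with $\Phi$ a flux of $\nabla f$ that is independent of $\beta$, and the specialization at $\beta=1$ (legitimate, since Corollary \ref{c-AFinitial-1} covers $\beta=1$, so $\wt E$ exists) identifies $\Phi=E-\wt E$ without ever evaluating the flux. Your asymptotic bookkeeping is right ($2\tau>n-2$ holds for all $n\ge 3$ under the stated hypothesis on $\tau$), and your argument that the limit defining $\Phi$ exists — as the difference of the convergent limits defining $\ol E$ and $E$, modulo errors tending to zero — is sound. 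So the net comparison is: identical argument for the momenta, and a self-contained proof, in the spirit of Simon's original observation, where the paper is content with a citation for the energies.
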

\begin{proof} The relation $\ol E=(1-\beta)E+\beta\wt E$ has been observed by Simon \cite{S}, see also \cite{MS} and \cite{G}. On the other hand,
\bee
\begin{split}
\ol K_{ij}-(\tr_{\ol g}\ol K)\ol g_{ij}=&e^{\beta f}\lf(K_{ij}-(\tr_{  g}K)  g_{ij}\ri)\\
=&\lf(K_{ij}-(\tr_{  g}K)  g_{ij}\ri)+(e^{\beta f}-1)\lf(K_{ij}-(\tr_{  g}K)  g_{ij}\ri).
\end{split}
\eee
Since $|e^f-1|\le Cr^{-\tau}$, $|K_{ij}|\le Cr^{-1-\tau}$ and $\tau>\max\{1/2, n-3\}$, which implies $1+2\tau>n-1$, we have $\ol P=P$. Similarly, $\wt P=P$.
\end{proof}
\begin{thm}\label{t-spacetime-pmt-1} Let $(M^n,g,K)$  and $f$ be as in Lemma \ref{l-EP-1}. If $n>3$, we also assume  that $M^n$ is spin.
Let $0<\beta<1$ and let $\ol g=e^{2\beta f}g$, $\ol K=e^{\beta f}K$; $\wt g=e^{2f}g$, $\wt K=e^{2f}K$. Define $\mu, J; \ol\mu, \ol J; \wt \mu, \wt J$ as in Lemma \ref{l-muJ-1} Suppose
$$
(1-\beta)\mu+ \beta e^{2f}\wt \mu\ge (1-\beta)|J|_g+\beta e^{2f}|\wt J|_{\wt g}.
$$
Then for any end $\mathcal{E}$, if $(E,P)$, $(\ol E, \ol P)$, $(\wt E,\wt P)$ be the energy-momentum vectors at $\mathcal{E}$ of  $(M,g,K)$, $(M,\ol g,\ol K)$ and $(M,\wt g,\wt K)$ respectively, then
$$
(1-\beta)E +\beta  \wt E\ge |P|.
$$
If $(1-\beta)E +\beta e^{2f} \wt E=0$ for an end, then $f=0$ and $(M,g)$ can be isometrically embedded in the Minkowski space $(\R^{n+1}, \eta_{ab})$ as an asymptotically flat Cauchy surface with second fundamental form $K$.
\end{thm}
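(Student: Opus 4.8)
The plan is to apply a standard spacetime positive mass theorem to the rescaled initial data set $(M,\ol g,\ol K)$ and then transport the conclusion back to $g$ and $\wt g$ using the algebraic identities already recorded in Corollary \ref{c-muJ-1} and Lemma \ref{l-EP-1}. The inequality will be essentially immediate once the dominant energy condition is checked; the rigidity statement is where the real work lies.

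First I would verify that $(M,\ol g,\ol K)$ satisfies the dominant energy condition $\ol\mu\ge |\ol J|_{\ol g}$. By Corollary \ref{c-muJ-1},
$$
\ol\mu=e^{-2\beta f}\lf((1-\beta)\mu+\beta e^{2f}\wt\mu+\beta(1-\beta)(n-1)(n-2)|\nabla_gf|^2\ri),\qquad |\ol J|_{\ol g}\le e^{-2\beta f}\lf((1-\beta)|J|_g+\beta e^{2f}|\wt J|_{\wt g}\ri).
$$
Since $0<\beta<1$ and $n\ge 3$, the coefficient $\beta(1-\beta)(n-1)(n-2)$ is positive, so the gradient term is nonnegative and the hypothesis $(1-\beta)\mu+\beta e^{2f}\wt\mu\ge (1-\beta)|J|_g+\beta e^{2f}|\wt J|_{\wt g}$ gives at once $\ol\mu\ge |\ol J|_{\ol g}$. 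By Corollary \ref{c-AFinitial-1} the data $(M,\ol g,\ol K)$ is again an AF initial data set, and it is spin precisely when $M$ is. Hence the spacetime positive mass theorem (via the Witten/Parker--Taubes spinor argument \cite{Witten,PT} for $n>3$ spin, and Schoen--Yau \cite{SchoenYau} for $n=3$) applies and yields $\ol E\ge |\ol P|$, where $|\ol P|$ is the Euclidean length of the ADM momentum. By Lemma \ref{l-EP-1} one has $\ol E=(1-\beta)E+\beta\wt E$ and $\ol P=P$, so this reads $(1-\beta)E+\beta\wt E\ge |P|$, which is the asserted inequality.

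For the rigidity, suppose $\ol E=(1-\beta)E+\beta\wt E=0$. Together with $\ol E\ge |\ol P|\ge 0$ this forces $\ol E=|\ol P|=0$, so the full energy-momentum vector of $(M,\ol g,\ol K)$ vanishes. The rigidity case of the spacetime positive mass theorem then gives that $(M,\ol g,\ol K)$ embeds isometrically in Minkowski space as an AF Cauchy hypersurface with second fundamental form $\ol K$; in particular the Gauss--Codazzi constraint equations force $\ol\mu\equiv 0$ (and $\ol J\equiv 0$). Substituting $\ol\mu\equiv 0$ into the displayed formula for $\ol\mu$, and using that the hypothesis makes $(1-\beta)\mu+\beta e^{2f}\wt\mu\ge 0$, the bracket is a sum of two nonnegative terms that must both vanish; in particular $\beta(1-\beta)(n-1)(n-2)|\nabla_gf|^2\equiv 0$. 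As $0<\beta<1$ and $n\ge 3$, this forces $\nabla_gf\equiv 0$, so $f$ is constant, and the decay $|f|\le Cr^{-\tau}$ then gives $f\equiv 0$. Consequently $\ol g=g$ and $\ol K=K$, and the Minkowski embedding of $(M,\ol g,\ol K)$ is exactly the claimed embedding of $(M,g,K)$ with second fundamental form $K$.

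The main obstacle is the rigidity step: one must invoke the correct sharp rigidity statement for the spacetime positive mass theorem, namely that $\ol E=|\ol P|=0$ under the dominant energy condition forces a Minkowski embedding (equivalently $\ol\mu\equiv 0$), and one must keep the spin case $n>3$ and the case $n=3$ properly separated \cite{Witten,PT,SchoenYau,BeigChruciel}. Once $\ol\mu\equiv 0$ is in hand, the passage to $f\equiv 0$ is just the pointwise positivity of the conformal gradient term, so the only genuinely nontrivial input is the quoted rigidity.
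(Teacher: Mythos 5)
Your proposal is correct and follows essentially the same route as the paper: verify the dominant energy condition $\ol\mu\ge|\ol J|_{\ol g}$ for $(M,\ol g,\ol K)$ via Corollary \ref{c-muJ-1}, invoke the spacetime positive mass theorem, translate back with Lemma \ref{l-EP-1}, and in the rigidity case use $\ol\mu\equiv 0$ together with the nonnegativity of the gradient term to force $\nabla_g f\equiv 0$ and hence $f\equiv 0$ by decay. The only cosmetic difference is that you separate the $n=3$ and spin $n>3$ citations explicitly, which the paper lumps into a single list of references.
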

\begin{proof} By Corollary \ref{c-AFinitial-1}, $(M,\ol g,\ol K)$ is an AF initial data set. Suppose
$$
(1-\beta)\mu+ \beta e^{2f} \wt \mu\ge (1-\beta)|J|_g+\beta e^{2f}|\wt J|_{\wt g}.
$$
then $\ol \mu\ge |\ol J|_{\ol g}$ by Corollary \ref{c-muJ-1}. By the spacetime positive mass theorem \cite{SchoenYau,Witten,PT,ChruscielMaerten2006,BeigChruciel}, we see that on each end the energy-momentum vector of $(M,\ol g,\ol K)$ satisfies:
$$
\ol E\ge |\ol P|
$$
which is equivalent to
$$
(1-\beta)E+ \beta  \wt E\ge |P|
$$
by Lemma \ref{l-EP-1}.
Moreover, if $(1-\beta)E +\beta  \wt E=0$ for an end, then $(M,\ol g,\ol K)$ can be isometrically embedded in the Minkowski space  as an asymptotically flat Cauchy surface with second fundamental form $\ol K$. Also $\ol \mu=\ol J=0$. In particular, $M$ has only one end, and $\nabla f=0$ by Corollary \ref{c-muJ-1} and the assumption that
$$
(1-\beta)\mu+ \beta e^{2f}\wt \mu\ge (1-\beta)|J|_g+\beta e^{2f}|\wt J|_{\wt g}.
$$
Hence $f=0$  because $f\to 0$ at infinity. Therefore, $(M,g,K)=(M,\ol g,\ol K)$ and the last assertion of the theorem is true.
\end{proof}
\begin{remark}
Using different version of positive mass theorems, we may obtain corresponding conformal positive mass theorems. For example, if $n<8$ we may use the result \cite{EichmairHuangLeeSchoen} to obtain a corresponding conformal positive mass theorem without assuming the manifold is spin.
\end{remark}
\section{conformal Positive Mass Theorem  for asymptotically hyperbolic manifolds}\label{AH cpmt}

Let us recall the definition  of an asymptotically hyperbolic (AH) manifold and the total mass integral of such an manifold. For simplicity, we assume the manifold has only one end.
We use the definition of an AH manifold as in \cite{ChruscielHerzlich}, see also \cite{Wang, Zhang}.
Let $\mathbb{H}^{n}$ denote the standard hyperbolic space, the metric $ \mathfrak{b}$ is given by
\begin{equation}
 \mathfrak{b}=d\rho^{2}+(\sinh\rho)^{2}h_{0},
\end{equation}
where $h_0$ is the standard metric of the unit sphere $\mathbb{S}^{n-1}$.
Let $\displaystyle{\fe_{i}=\frac{\phi_{i}}{\sinh\rho}}$ and $\displaystyle{\fe_{0}=\frac{\nabla\rho}{|\nabla\rho|}}$ be an orthonormal frame for $ \mathfrak{b}$, where $\{\phi_{i}\}_{i=1,\cdots,n-1}$ is a local orthonormal frame for
$h_{0}$.

\begin{definition}\label{d-AH}  $(M^n,g)$ is called  {\it asymptotically hyperbolic} (AH)
if, outside a compact set, $M$ is diffeomorphic to the exterior of  some geodesic sphere  $\S_{\rho_0}$  in $\H^n$ such that
the metric components  $ g_{ij} =  g(\fe_i,\fe_j)$,  $0 \le i, j \le n-1$,  satisfy
\be
|g_{ij}-\delta_{ij}|=O(e^{-\tau \rho}),\ |\fe_k(g_{ij})|=O(e^{-\tau \rho}), \ |\fe_k(\fe_l(g_{ij}))|=(e^{-\tau \rho})
\ee
for some $\tau>\frac n2$. Moreover, we assume  $\mathcal{S}_g+n(n-1)$ is in
$L^1(\H^n, e^\rho dv_0)$,  where  $dv_0$ is the volume element of $ \mathfrak{b}$ and $\mathcal{S}_g$ is the scalar curvature of $g$.
\end{definition}

Next we want to define the mass integral on an AH manifold. There are equivalent expressions for the mass integral. It is more easy to express the mass integral in terms of the Ricci tensor by the result of Herzlich \cite{Herzlich}. The mass integral is defined as a linear functional on the kernel of the formal adjoint of the linearized scalar curvature $(D\mathcal{S})_{\mathfrak{b}}^*$. To be precise, using the ball model for $\H^n$, so that
$
{B}^n=\{  x \in \R^n|\ |x|<1\}$ and
$\mathfrak{b}_{ij}= \displaystyle{\frac{4}{(1-|x|^2)^2} }\delta_{ij} $. Consider the conformal Killing vector fields
$$ X^{(0)}=x^k\frac{\p}{\p x^k}, \ \   X^{(j)}=\frac{\p}{\p x^j}, \ 1\le j \le n.  $$
 On the other hand, let $ \theta=(\theta^1,\dots,\theta^n)\in \mathbb{S}^{n-1}\subset \R^n$,
    the functions
   \be \label{e-def-Vi}
   V^{(0)}=\cosh \rho , \ \ V^{(j )}=\theta^j  \sinh \rho , \ 1 \le j \le n
   \ee
   which  form a basis of   the kernel of  $(D\mathcal{S})_{\mathfrak{b}}^*$. $X^{(i)}$ and $V^{(i)}$ are related by $\div_{\mathfrak{b}}(X^{(i)}=nV^{(i)}$.
 \begin{definition}\label{d-massint} The mass integral $\mathbf{M}(g)$  for an AH manifold $(M^n,g)$  is the vector $(\mathbf{M}(g) (V^{ (0)}),\dots,\mathbf{M}(g) (V^{ (n)}))$ where

   $$
\mathbf{M}(g) (V^{ (i)})= - c_n\lim_{ \rho  \to\infty}\int_{\S_\rho}G^g (X^{(i)},\nu_g)d\sigma_g .
$$
where $\S_\rho $ is  the  geodesic sphere  of radius $\rho$  centered at $x=0$  in $({B}^n, \mathfrak{b})$,
$\nu_g$ is  the outward unit normal to  $\S_\rho $ with respect to  $g$,   $d\sigma_g$ is the volume element on $\S_\rho $
of the metric induced by $g$, $c_n=\displaystyle{1/(n-1)(n-2)\omega_{n-1}}$ and $\omega_{n-1}$ is the volume of the standard sphere $\mathbb{S}^{n-1}$. Here
\be\label{e-G-1}
G^g=\Ric(g)-\frac12\lf[\mathcal{S}_g+(n-1)(n-2)g\ri].
\ee
\end{definition}

\begin{lma}\label{l-conformal-G-1}
Let $(M^n,g)$ be a Riemannian manifold. Let $f$ be a smooth function on $M$ and let $0<\beta \le 1$. Define $\ol g=e^{2\beta f}g$ and $\wt g=e^{2f}g$. Then
\begin{enumerate}
  \item [(i)]
  $$
\ol{\mathcal{S}}=e^{-2\beta f}\lf\{(1-\beta) \mathcal{S}+\beta e^{2f}\, \wt{\mathcal{S}}+\beta(1-\beta)(n-1)(n-2) |\nabla_g f|^2\ri\}.
$$
where $\mathcal{S}, \ol{\mathcal{S}}, \wt{ \mathcal{S}}$ are the scalar curvatures of $g, \ol g, \wt g$ respectively.

  \item [(ii)]
  \bee
\begin{split}
G^{\ol g}_{ij}=&(1-\beta)G^g_{ij}+\beta G^{\wt g}_{ij}+\frac12(e^{2f}-e^{2\beta f} )(n-1)(n-2)g_{ij}
\\
&+\frac12\beta(\beta-1)(n-2)(n-3)|\nabla_gf|^2g_{ij}+\beta(\beta-1)(n-2)f_if_j.
\end{split}
\eee

\end{enumerate}

\end{lma}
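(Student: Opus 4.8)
The plan is to derive both identities directly from the conformal transformation rules in Lemma \ref{l-conformal-1}, applied once with the function $\beta f$ (which governs $\ol g=e^{2\beta f}g$) and once with $f$ itself, i.e.\ the case $\beta=1$ (which governs $\wt g=e^{2f}g$), and then to take the appropriate linear combination. The organising observation is that in Lemma \ref{l-conformal-1} the Laplacian and Hessian terms are linear in the conformal function while the $|\nabla_g f|^2$ and $f_if_j$ terms are quadratic; replacing $f$ by $\beta f$ therefore turns the linear terms into $\beta(\cdot)$ and the quadratic terms into $\beta^2(\cdot)$, and it is precisely the mismatch between $\beta$ and $\beta^2$ that generates the $\beta(\beta-1)$ coefficients in the statement.

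For part (i) I would write the scalar curvature formula of Lemma \ref{l-conformal-1} twice, namely $e^{2\beta f}\ol{\mathcal{S}}=\mathcal{S}-2(n-1)\beta\Delta_g f-(n-1)(n-2)\beta^2|\nabla_g f|^2$ and $e^{2f}\wt{\mathcal{S}}=\mathcal{S}-2(n-1)\Delta_g f-(n-1)(n-2)|\nabla_g f|^2$, solve the second for $\Delta_g f$, and substitute into the first to eliminate the Laplacian. The surviving $|\nabla_g f|^2$ terms combine as $(\beta-\beta^2)(n-1)(n-2)=\beta(1-\beta)(n-1)(n-2)$, which yields (i) after multiplying by $e^{-2\beta f}$. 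This is the same elimination already used with $\beta=1$ in the proof of Corollary \ref{c-muJ-1}.

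For part (ii) I would begin from $G^{\ol g}_{ij}=\tfrac{}{}\ol R_{ij}-\tfrac12[\ol{\mathcal{S}}+(n-1)(n-2)]\ol g_{ij}$, with $\ol g_{ij}=e^{2\beta f}g_{ij}$. Applying the Ricci formula of Lemma \ref{l-conformal-1} with $\beta f$ and with $f$ and taking the same linear combination gives the clean relation
$$
\ol R_{ij}=(1-\beta)R_{ij}+\beta\wt R_{ij}+\beta(\beta-1)\lf[(2-n)|\nabla_g f|^2g_{ij}+(n-2)f_if_j\ri],
$$
since the linear $\Delta_g f$ and $f_{;ij}$ terms reproduce the convex combination exactly and only the quadratic terms leave a $\beta(\beta-1)$ remainder. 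For the scalar piece I would insert the identity $e^{2\beta f}\ol{\mathcal{S}}=(1-\beta)\mathcal{S}+\beta e^{2f}\wt{\mathcal{S}}+\beta(1-\beta)(n-1)(n-2)|\nabla_g f|^2$ from part (i). Subtracting $(1-\beta)G^g_{ij}+\beta G^{\wt g}_{ij}$ then cancels the $R_{ij}$, $\wt R_{ij}$, $\mathcal{S}$ and $e^{2f}\wt{\mathcal{S}}$ contributions in pairs, leaving three families: the term $\beta(\beta-1)(n-2)f_if_j$; the gradient term, whose two contributions $(n-2)\beta(1-\beta)|\nabla_g f|^2$ (from the Ricci remainder) and $-\tfrac12\beta(1-\beta)(n-1)(n-2)|\nabla_g f|^2$ (through part (i)) add to $\tfrac12\beta(\beta-1)(n-2)(n-3)|\nabla_g f|^2$; and the constant-curvature term.

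The main obstacle is the bookkeeping of this last, constant-curvature, family, because the $(n-1)(n-2)$ terms in $G^{\ol g}$, $G^{g}$ and $G^{\wt g}$ carry three distinct conformal weights $e^{2\beta f}$, $1$ and $e^{2f}$ attached to $\ol g$, $g$ and $\wt g$ respectively. Collecting $-\tfrac12(n-1)(n-2)e^{2\beta f}g_{ij}$ against the matching pieces of $(1-\beta)G^g$ and $\beta G^{\wt g}$ gives $\tfrac12(n-1)(n-2)\lf[(1-\beta)+\beta e^{2f}-e^{2\beta f}\ri]g_{ij}$; I would check this carefully against the coefficient $\tfrac12(n-1)(n-2)(e^{2f}-e^{2\beta f})g_{ij}$ recorded in the statement, since the two coincide at $\beta=1$ but appear to differ by $\tfrac12(n-1)(n-2)(1-\beta)(e^{2f}-1)g_{ij}$ for $0<\beta<1$. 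This constant-curvature collection, together with the sign tracking in the gradient term that produces the factor $(n-3)$, is where I expect the only real risk of a slip.
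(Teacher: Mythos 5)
Your derivation is correct and it is exactly the paper's method: apply Lemma \ref{l-conformal-1} once with conformal function $\beta f$ and once with $f$, then take the convex combination so that the terms linear in $f$ (Laplacian and Hessian) cancel and only the quadratic terms leave a $\beta(\beta-1)$ remainder. Part (i) and the Ricci/gradient/$f_if_j$ bookkeeping in part (ii) agree with the paper's computation.

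The discrepancy you flagged in the constant-curvature term is not a slip on your part: it is an error in the printed statement. The coefficient that actually comes out of the computation --- and out of the paper's own two displayed relations for $G^{\ol g}$ and $G^{\wt g}$, whose combination the paper leaves to the reader with ``the result follows'' --- is
$$
\frac12(n-1)(n-2)\left[(1-\beta)+\beta e^{2f}-e^{2\beta f}\right]g_{ij},
$$
not $\frac12(n-1)(n-2)\left(e^{2f}-e^{2\beta f}\right)g_{ij}$; the two differ by $\frac12(n-1)(n-2)(1-\beta)\left(1-e^{2f}\right)g_{ij}$, which is nonzero whenever $0<\beta<1$ and $f\neq 0$. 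A quick sanity check confirms your version: take $g$ flat and $f$ a nonzero constant, so that $G^{e^{2u}g}=-\frac12(n-1)(n-2)e^{2u}g$ for any constant $u$; only the corrected coefficient balances the identity. Moreover, the corrected form is the one actually needed downstream: $(1-\beta)+\beta e^{2f}-e^{2\beta f}=2\beta(1-\beta)f^2+O(f^3)$ is quadratic in $f$, hence $O(e^{-2\tau\rho})$ under the hypotheses of Lemma \ref{l-AH-1}, and its contribution to the mass integral vanishes because $2\tau>n$; the printed term would only be $O(e^{-\tau\rho})$, which does not suffice since $\tau>n/2$ does not force $\tau>n$. So your proof is right and follows the paper's route; you should simply record the corrected coefficient in the statement.
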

\begin{proof} (i) follows from  Lemma \ref{l-conformal-1} immediately.

(ii) By Lemma \ref{l-conformal-1} and the definition of $G$, we have
\bee
\begin{split}
G^{\ol g}_{ij}=&\ol R_{ij}-\frac12\lf[\ol{\mathcal{S}}+(n-1)(n-2)\ol g\ri]\\
=&R_{ij}-\beta\lf((\Delta_g f)g_{ij}+(n-2)f_{;ij}\ri)+\beta^2(n-2)\lf(f_if_j-|\nabla_gf|^2g_{ij}\ri)\\
&-\frac12\lf[e^{-2\beta f}\lf(\mathcal{S}-2\beta(n-1)\Delta_g f-\beta^2(n-1)(n-2)|\nabla_gf|^2\ri)+(n-1)(n-2)\ri]e^{2\beta f}g_{ij}\\
=&G^g_{ij}+\frac12(1-e^{2\beta f})(n-1)(n-2)g_{ij}+(n-2)\beta \lf((\Delta_gf)g_{ij}-f_{;ij}\ri)\\
&+\frac12\beta^2(n-2)(n-3)|\nabla_gf|^2g_{ij}+\beta^2(n-2)f_if_j
\end{split}
\eee
Similarly,
\bee
\begin{split}
G^{\wt g}_{ij}=& G^g_{ij}+\frac12(1-e^{2  f})(n-1)(n-2)g_{ij}+(n-2) \lf((\Delta_gf)g_{ij}-f_{;ij}\ri)\\
&+\frac12 (n-2)(n-3)|\nabla_gf|^2g_{ij}+ (n-2)f_if_j.
\end{split}
\eee
From the above two relations, the result follows.
\end{proof}
\begin{lma}\label{l-AH-1}Let $(M^n,g)$ be an AH manifold as in Definition \ref{d-AH}. Let $f$ be a smooth function on $M$ such that
$$
|f|+|\fe_kf|+|\fe_l\fe_k f|=O(e^{-\tau \rho})
$$
 for all $0\le l, k\le n-1$. For $0<\beta\le 1$, let $\ol g=e^{2\beta f}g$, $\wt g=e^{2f}g$. Then the mass integrals of $g, \ol g, \wt g$ are related as follows:
 \bee
 \mathbf{M}(\ol g)=(1-\beta)\mathbf{M}(g)+\beta \mathbf{M}(\wt g).
 \eee

\end{lma}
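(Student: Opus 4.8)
The plan is to reduce the three mass integrals to a single family of boundary integrals over the geodesic spheres $\Sigma_\rho$ with a common measure, substitute the pointwise relation of Lemma \ref{l-conformal-G-1}(ii), and show that every discrepancy term integrates to zero as $\rho\to\infty$.

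First I would rewrite the integrands of $\mathbf{M}(\ol g)$ and $\mathbf{M}(\wt g)$ in terms of the normal and area element of $g$. Since $\ol g=e^{2\beta f}g$, the $\ol g$-unit normal and area element of $\Sigma_\rho$ are $\nu_{\ol g}=e^{-\beta f}\nu_g$ and $d\sigma_{\ol g}=e^{(n-1)\beta f}d\sigma_g$, so that $G^{\ol g}(X^{(i)},\nu_{\ol g})\,d\sigma_{\ol g}=e^{(n-2)\beta f}G^{\ol g}(X^{(i)},\nu_g)\,d\sigma_g$, and likewise $G^{\wt g}(X^{(i)},\nu_{\wt g})\,d\sigma_{\wt g}=e^{(n-2)f}G^{\wt g}(X^{(i)},\nu_g)\,d\sigma_g$. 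After this reduction the claimed identity is equivalent to
$$
\lim_{\rho\to\infty}\int_{\Sigma_\rho}\Big[e^{(n-2)\beta f}G^{\ol g}(X^{(i)},\nu_g)-(1-\beta)G^{g}(X^{(i)},\nu_g)-\beta e^{(n-2)f}G^{\wt g}(X^{(i)},\nu_g)\Big]\,d\sigma_g=0 .
$$

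Next I would insert $G^{\ol g}=(1-\beta)G^{g}+\beta G^{\wt g}+E$ from Lemma \ref{l-conformal-G-1}(ii), where $E$ denotes the collection of correction terms. The bracket then splits into the three pieces
$$
(1-\beta)\big(e^{(n-2)\beta f}-1\big)G^{g}(X^{(i)},\nu_g)+\beta\big(e^{(n-2)\beta f}-e^{(n-2)f}\big)G^{\wt g}(X^{(i)},\nu_g)+e^{(n-2)\beta f}E(X^{(i)},\nu_g).
$$
The estimates I would use are all uniform on $\Sigma_\rho$ in the $\mathfrak b$-orthonormal frame: the hypotheses on $f$ give $e^{(n-2)\beta f}-1=O(e^{-\tau\rho})$ and $e^{(n-2)\beta f}-e^{(n-2)f}=O(e^{-\tau\rho})$; since $G^{\mathfrak b}=0$ and $g,\wt g$ are AH, the frame components of $G^{g}$ and $G^{\wt g}$ are $O(e^{-\tau\rho})$, whence $G^{g}(X^{(i)},\nu_g)$ and $G^{\wt g}(X^{(i)},\nu_g)$ are $O(e^{(1-\tau)\rho})$ because $|X^{(i)}|_{\mathfrak b}=O(e^{\rho})$. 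Thus the first two pieces are $O(e^{(1-2\tau)\rho})$ pointwise, and integrating against $d\sigma_g=O(e^{(n-1)\rho})$ gives $O(e^{(n-2\tau)\rho})\to0$ since $\tau>n/2$.

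The heart of the matter is the term $e^{(n-2)\beta f}E(X^{(i)},\nu_g)$, and the point I expect to be the main obstacle is showing that every summand of $E$ is genuinely quadratic in $f$ and its first derivatives, hence $O(e^{-2\tau\rho})$ in frame components. The summands carrying $|\nabla_g f|^2$ and $f_if_j$ are manifestly quadratic. The delicate one is the summand proportional to $g_{ij}$: one must check that its scalar coefficient vanishes to second order at $f=0$, i.e. that both its value and its first $f$-derivative vanish there, so that it is $O(f^2)=O(e^{-2\tau\rho})$ rather than merely $O(e^{-\tau\rho})$. Granting this, $E(X^{(i)},\nu_g)=O(e^{(1-2\tau)\rho})$ and its integral over $\Sigma_\rho$ is again $O(e^{(n-2\tau)\rho})\to0$. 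This is exactly where the hypothesis $\tau>n/2$ is used, and it is also the reason the $g_{ij}$-direction — which would otherwise contribute a divergent $O(e^{(n-\tau)\rho})$ — is harmless; combining the three vanishing limits yields $\mathbf{M}(\ol g)=(1-\beta)\mathbf{M}(g)+\beta\mathbf{M}(\wt g)$.
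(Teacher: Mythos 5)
Your strategy is sound and is, in essence, the paper's own: put all three mass integrals over a common normal and area element (you do this exactly, via $\nu_{\ol g}=e^{-\beta f}\nu_g$ and $d\sigma_{\ol g}=e^{(n-1)\beta f}d\sigma_g$, whereas the paper replaces $\nu_g,d\sigma_g$ by the background $\nu_{\mathfrak b},d\sigma_{\mathfrak b}$ using $|G^g|_{\mathfrak b}=O(e^{-\tau\rho})$ --- a minor difference), then substitute Lemma \ref{l-conformal-G-1}(ii) and show the discrepancy integrates to zero. Your estimates for the pieces $(1-\beta)\bigl(e^{(n-2)\beta f}-1\bigr)G^g(X^{(i)},\nu_g)$ and $\beta\bigl(e^{(n-2)\beta f}-e^{(n-2)f}\bigr)G^{\wt g}(X^{(i)},\nu_g)$, and for the $|\nabla_gf|^2$ and $f_if_j$ summands of the correction term $E$, are correct, and the threshold $\tau>n/2$ enters exactly where you say it does.

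The genuine gap is the step you explicitly ``grant'': that the $g_{ij}$-proportional summand of $E$ vanishes to second order at $f=0$. You cannot grant it, because it is false for Lemma \ref{l-conformal-G-1}(ii) \emph{as printed}: there the summand is $\tfrac12(e^{2f}-e^{2\beta f})(n-1)(n-2)g_{ij}$, and $e^{2f}-e^{2\beta f}=2(1-\beta)f+O(f^2)$ is genuinely \emph{linear} in $f$ when $0<\beta<1$. Taking that statement at face value, this summand contracted with $X^{(i)}$ contributes $O(e^{(1-\tau)\rho})$ pointwise, hence $O(e^{(n-\tau)\rho})$ after integration, which need not tend to zero (and generically diverges when $\tau<n$); your argument, and the lemma itself, would collapse. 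The resolution --- which is precisely the verification you left out --- is that the printed coefficient in Lemma \ref{l-conformal-G-1}(ii) contains an algebra slip. Combining the two displayed formulas in that lemma's proof, the coefficient of $g_{ij}$ in $G^{\ol g}-(1-\beta)G^g-\beta G^{\wt g}$ is actually $\tfrac12(n-1)(n-2)\bigl[(1-e^{2\beta f})-\beta(1-e^{2f})\bigr]=\tfrac12(n-1)(n-2)\bigl[(1-\beta)+\beta e^{2f}-e^{2\beta f}\bigr]$, and $(1-\beta)+\beta e^{2f}-e^{2\beta f}=2\beta(1-\beta)f^2+O(f^3)=O(e^{-2\tau\rho})$, which is exactly the quadratic vanishing your argument needs (note also that the Hessian and Laplacian terms cancel identically, since both come with the factor $\beta$). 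So to complete your proof you must actually carry out this computation --- and in doing so correct the quoted identity --- rather than assume the needed second-order vanishing.
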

\begin{proof} First note that $\ol g$ and $\wt g$ are also AH. On the other hand, if $\nu_\mathfrak{b}$ is the unit outward  normal of $\S_\rho$ and $d\sigma_{\mathfrak{b}}$ is the volume element of $\S_\rho$ with respect to $\mathfrak{b}$, then (see
   \cite{MiaoTamXie} for example):
\begin{equation}
\lf| \nu_{g}-\nu_{\mathfrak{b}}\ri|_{\mathfrak{b}}= O(e^{-\tau\rho}),
\end{equation}
\begin{equation}
\lf|\frac{d\sigma_{g}}{d\sigma_{\mathfrak{b} }}-1\ri|=O(e^{-\tau\rho}).
\end{equation}
\bee
\lf|\Ric_g(X,Y)-\Ric_{\mathfrak{b}}(X,Y)\ri|= O(e^{-\tau\rho})\lf| X\ri|_{\mathfrak{b}}\lf| Y\ri|_{\mathfrak{ b}},
\eee
\bee
|\mathcal{S}_g-\mathcal{S}_{\mathfrak{b}}|= O(e^{-\tau\rho})
\eee
which imply
$$
|G^g|_{\mathfrak{b}}=O(e^{-\tau\rho}).
$$
because $G^\mathfrak{b}=0$.

Hence if $V^{(i)}$ is as in Definition \ref{d-massint}, we have
$$
\mathbf{M}(g)(V^{(i)})=-c_n\lim_{ \rho  \to\infty}\int_{\S_\rho}G^g (X^{(i)},\nu_\mathfrak{b})d\sigma_\mathfrak{b},
$$
because $|X^{(i)}|_\mathfrak{b}=O(e^\rho)$.
Similarly,
\bee
\mathbf{M}(\ol g)(V^{(i)})=-c_n\lim_{ \rho  \to\infty}\int_{\S_\rho}G^{\ol g} (X^{(i)},\nu_\mathfrak{b})d\sigma_\mathfrak{b},
\eee
and
\bee
\mathbf{M}(\wt g)(V^{(i)})=-c_n\lim_{ \rho  \to\infty}\int_{\S_\rho}G^{\wt g} (X^{(i)},\nu_\mathfrak{b})d\sigma_\mathfrak{b}.
\eee
Since $|\nabla_gf|=O(e^{-\tau \rho})$, by Lemma \ref{l-conformal-G-1}(ii), the result follows.
\end{proof}

\begin{thm}\label{t-cpmt-1}
Let $(M^n,g)$ is an asymptotically hyperbolic spin manifold and let $0<\beta \le 1$, $f$, $\ol g$, $\wt g$ be as in Lemma \ref{l-AH-1}. Suppose
$$
e^{-2\beta f}\lf((1-\beta)\mathcal{S}+\beta e^{2f}\wt{\mathcal{S}}\ri)\ge -n(n-1),
$$
and $\ol {\mathcal{S}}+n(n-1)$, $\wt {\mathcal{S}}+n(n-1)$ are in $L^1(e^\rho dv_0)$. If $M$ has a connected inner boundary $\Sigma$, then we assume $\Sigma$ has   positive Yamabe invariant $Y(\Sigma)$ in the conformal class of $g$ and  the mean curvature $H_g$ with respect to the unit outward normal $\nu$ and $g$   satisfies:

 $$
 e^{-\beta f}\lf(H_g+\beta(n-1)f_{\nu}\ri)\le \lf[\frac{Y(\Sigma)}{(n-1)(n-2)}\lf(\int_\Sigma e^{\beta(n-1)f}d\sigma_g\ri)^{-\frac2{n-1}}+1\ri]^\frac12
 $$
 where $\nu$ is the unit normal pointing insider $M$. Then $(1-\beta)\mathbf{M}(g)+\beta \mathbf{M}(\wt g)$ is future pointing timelike or zero. It is zero then $f=0$, $(M,g)$ has an imaginary Killing spinor field, $\Sigma$ is totally umbilical with respect to $g$ carrying a real Killing spinor. In particular, $(M, g)$ is Einstein.

\end{thm}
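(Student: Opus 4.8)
The plan is to reduce the whole statement to a single application of the positive mass theorem for asymptotically hyperbolic spin manifolds (with inner boundary) to the metric $\ol g=e^{2\beta f}g$, in exactly the spirit of the proof of Theorem \ref{t-spacetime-pmt-1}. First I would check the bulk curvature hypothesis for $\ol g$. By Lemma \ref{l-conformal-G-1}(i),
$$
\ol{\mathcal{S}}=e^{-2\beta f}\lf\{(1-\beta)\mathcal{S}+\beta e^{2f}\wt{\mathcal{S}}+\beta(1-\beta)(n-1)(n-2)|\nabla_g f|^2\ri\}.
$$
Since $0<\beta\le 1$, the gradient term is nonnegative, so the hypothesis $e^{-2\beta f}((1-\beta)\mathcal{S}+\beta e^{2f}\wt{\mathcal{S}})\ge -n(n-1)$ immediately forces $\ol{\mathcal{S}}\ge -n(n-1)$, which is precisely the curvature condition required by the AH positive mass theorem. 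Together with the assumed $L^1(e^\rho dv_0)$ control on $\ol{\mathcal{S}}+n(n-1)$ and the fact (noted in Lemma \ref{l-AH-1}) that $\ol g$ is again AH, the interior hypotheses are in place.

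Next I would translate the boundary hypothesis into a condition on $\ol g$. Under the conformal change $\ol g=e^{2\beta f}g$, the mean curvature of $\Sigma$ transforms by the standard law $\ol H=e^{-\beta f}(H_g+(n-1)\beta f_\nu)$ (the normal direction is unchanged, only rescaled), which is exactly the left-hand side of the displayed inequality. On the right-hand side, the Yamabe invariant $Y(\Sigma)$ is a conformal invariant, hence unchanged when $g|_\Sigma$ is replaced by $\ol g|_\Sigma=e^{2\beta f}g|_\Sigma$, while the induced boundary volume element satisfies $d\sigma_{\ol g}=e^{(n-1)\beta f}d\sigma_g$, so that $\int_\Sigma e^{\beta(n-1)f}d\sigma_g=\mathrm{Vol}_{\ol g}(\Sigma)$. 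Thus the hypothesis is precisely the assertion that, with respect to $\ol g$, the mean curvature of $\Sigma$ is bounded above by $[\,Y(\Sigma)/((n-1)(n-2))\cdot \mathrm{Vol}_{\ol g}(\Sigma)^{-2/(n-1)}+1\,]^{1/2}$, which is exactly the boundary hypothesis of the AH positive mass theorem with inner boundary in \cite{ChruscielHerzlich,H}. (When $M$ has no inner boundary this step is vacuous and one uses the boundaryless version.)

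With these verifications I would apply the AH positive mass theorem (spin version) to $(M,\ol g)$ to conclude that $\mathbf{M}(\ol g)$ is future pointing timelike or zero. By Lemma \ref{l-AH-1}, $\mathbf{M}(\ol g)=(1-\beta)\mathbf{M}(g)+\beta\mathbf{M}(\wt g)$, giving the first conclusion. For the rigidity, suppose $\mathbf{M}(\ol g)=0$. The rigidity part of the theorem then provides an imaginary Killing spinor on $(M,\ol g)$, forces $(M,\ol g)$ to be Einstein with $\ol{\mathcal{S}}=-n(n-1)$, and makes $\Sigma$ totally umbilical carrying a real Killing spinor. Substituting $\ol{\mathcal{S}}=-n(n-1)$ into the displayed formula for $\ol{\mathcal{S}}$ and using the curvature hypothesis forces the nonnegative term $\beta(1-\beta)(n-1)(n-2)|\nabla_g f|^2$ to vanish, so $f$ is constant; since $f\to 0$ at infinity, $f\equiv 0$ and hence $\ol g=g$. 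All the rigidity conclusions then transfer directly to $(M,g)$.

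The step I expect to be the main obstacle is matching the boundary condition exactly with the hypothesis of the specific AH positive mass theorem with inner boundary being invoked: getting the conformal transformation law for $\ol H$ right (including the sign convention for $\nu$, which here points into $M$), and verifying that the conformally invariant combination of $Y(\Sigma)$, the transformed boundary volume, and the additive constant $1$ reproduces precisely the threshold appearing in that theorem. The interior-curvature and mass-relation steps are routine once Lemmas \ref{l-conformal-G-1} and \ref{l-AH-1} are available; the care lies entirely in the boundary bookkeeping.
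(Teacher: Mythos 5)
Your proposal is correct and follows essentially the same route as the paper: verify $\ol{\mathcal{S}}\ge -n(n-1)$ via Lemma \ref{l-conformal-G-1}(i), match the boundary hypothesis with the conformal transformation of $H$ and the boundary volume, apply the AH positive mass theorem of \cite{Wang,ChruscielHerzlich,H} to $(M,\ol g)$, and transfer the mass and rigidity conclusions back through Lemma \ref{l-AH-1} and the vanishing of $|\nabla_g f|^2$. The paper's proof is exactly this argument, stated more tersely.
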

\begin{proof} By Lemma \ref{l-conformal-G-1} and the assumption, the scalar curvature of $\ol {\mathcal{S}}$ satisfies:
\be\label{e-AHpmt-1}
\begin{split}
\ol{\mathcal{S}}=&e^{-2\beta f}\lf\{(1-\beta) \mathcal{S}+\beta e^{2f}\, \wt{\mathcal{S}}+\beta(1-\beta)(n-1)(n-2) |\nabla_g f|^2\ri\}\\
\ge& -n(n-1)+\beta(1-\beta)(n-1)(n-2)e^{-2\beta f} |\nabla_g f|^2\\
\ge &-n(n-1).
\end{split}
\ee
Note that if $M$ has an inner boundary, the mean curvature of $\Sigma$ with respect to $\ol g$ is:
$$
H_{\ol g}=e^{-\beta f}\lf(H_g+\beta(n-1)f_{\nu}\ri).
$$
Moreover
$$
|\Sigma|_{\ol g}=\int_\Sigma e^{\beta(n-1)f}d\sigma_g.
$$
We can apply the results of  \cite{Wang,ChruscielHerzlich,H} to the manifold $(M,\ol g)$
to conclude that $\mathbf{M}(\ol g)$ is future timelike or zero. By Lemma \ref{l-AH-1}, we conclude that $(1-\beta)\mathbf{M}(g)+\beta \mathbf{M}(\wt g)$ is future timelike or zero. If it is zero, then by \cite{Wang,ChruscielHerzlich,H},     $\ol g$ has constant scalar curvature $-n(n-1)$. Hence by \eqref{e-AHpmt-1}, $f$ is constant which is zero because $f\to0$ at infinity. Hence $\ol g=g$. The results again follows from \cite{Wang,ChruscielHerzlich,H}.

\end{proof}

\end{document}